\newcounter{saveeqn}                                 
\newcommand{\Lu}{\ensuremath{{\overline{D}}}}
\newcommand{\M}{\ensuremath{{\mathcal M}}}
\newcommand{\R}{\ensuremath{{\mathds R}}}
\newcommand{\leer}{\vspace{0.5cm}}					
\newtheorem{satz}[thm]{Theorem}
\newtheorem{bem}[thm]{Remark}
\newtheorem{defi}[thm]{Definition}
\DeclareMathOperator*{\argmax}{arg\,max}
\shorttitle{Optimal portfolio selection under vanishing fixed costs} 
\begin{document}

\title{Optimal portfolio selection under vanishing fixed transaction costs} 

\authorone[Hamburg University]{S\"oren Christensen} 
\authortwo[Christian-Albrechts-University Kiel]{Albrecht Irle} 
\authorthree[Christian-Albrechts-University Kiel]{Andreas Ludwig} 
\addressone{Department of Mathematics, SPST, Bundesstra\ss e 55, 20146 Hamburg, Germany} 
\addresstwo{Mathematisches Seminar,
	Ludewig-Meyn-Str. 4, D-24098 Kiel, Germany} 
\addressthree{Mathematisches Seminar,
	Ludewig-Meyn-Str. 4, D-24098 Kiel, Germany} 

\begin{abstract}
In this paper, asymptotic results in a long-term growth rate portfolio optimization model under both fixed and proportional transaction costs are obtained. More precisely,
the convergence of the model when the fixed costs tend to zero is investigated. A suitable limit model with purely proportional costs
is introduced and
the convergence of optimal boundaries, asymptotic growth rates, and optimal risky fraction processes is rigorously proved. The results are based on an in-depth analysis of the convergence of the solutions to the corresponding HJB-equations. \end{abstract}

\keywords{impulse control; Hunt processes; threshold rules; excessive functions; threshold rules} 

\ams{91G10}{93E20} 

\section{Introduction}

In this article, we study portfolio optimization problems in a Black-Scholes market using the 
\textit{Kelly-criterion} of maximizing the asymptotic growth rate 
$$\liminf\limits_{T\to\infty}\frac{1}{T}E(\log V_T),$$
going back to Kelly in \cite{K56}. Without transaction costs, the problem can be solved along the lines of \cite{M69} 
and it can essentially be obtained that the optimal strategy consists of keeping the fraction of total wealth invested in each asset constant. In Merton's honor this fraction is called the \textit{Merton ratio} or \textit{Merton fraction}.

It is, however, more realistic that an investor faces different types of costs such as brokerage and management fees, search and information costs, commissions and many others. For these kind of costs the notion \textit{transaction costs} shall be used. There are basically three different approaches to model transaction costs. The earliest approach goes back to Magill and Constantinides in \cite{MC76} and considers \textit{proportional transaction costs}, i.e.$,~$the investor has to pay a fixed proportion of each trading volume.
For the discounted consumption criterion they conjectured for the case of one stock that the optimal strategy is to keep the risky fraction process in a certain interval $[A,B]\subset\,]0,1[$ with minimal effort. The resulting process is a reflected diffusion with infinitesimal trading at the boundaries $A,B$. This was then proved by Davis and Norman for the logarithmic and power utility in \cite{DN90}. Shreve and Soner analyzed rigorously the optimal strategy in \cite{SS94} and established the value function as the unique solution to the Hamilton-Jacobi-Bellman equation, using viscosity solution techniques. The general case $d\geq 1$ was then treated in \cite{AMS96}.\\
The problem of maximizing the asymptotic growth rate under proportional transaction costs was solved by Taksar, Klass, and Assaf in \cite{TKA88} for one stock, where only selling stocks was punished. They derived the same structure of the optimal strategy as Davis and Norman. Using viscosity techniques Akian, Sulem, and Taksar proved existence of a solution to the HJB-equation, which is of variational form, in the $d$-stock case in \cite{AST01}.

Considering \textit{fixed transaction costs}, i.e.$~$at every transaction a fixed proportion of the investor's wealth has to be paid, solves the problem of the rather unfeasible infinitesimal trading of purely proportional costs and represents a second approach.
Here, the so-called \textit{impulse control theory} is applied and the strategies are of impulsive form, i.e.$~$they are given via a sequence $(\tau_n,\eta_n)_{n\in\N_0}$ consisting of stopping times $\tau_n$ with respect to $(\mathcal{F}_t)_{t\geq 0}$ that denote the trading times and satisfy
\begin{enumerate}[{(}i{)}]
	\item $0=\tau_0\leq\tau_1\leq\tau_2\leq\ldots$ with $\tau_n<\tau_{n+1}$ on $\{\tau_n<\infty\}$ for all $n\in\N_0$,
	\item $P\left(\lim\limits_{n\to\infty}\tau_n=\infty\right)=1$
\end{enumerate}
and $\mathcal{F}_{\tau_n}$-measurable $\R^d$-valued random variables $\eta_n$, describing the trading volume at $\tau_n$. The optimal strategy in the one stock case is to wait until the risky fraction process reaches the boundary of some interval $[A,B]\subset \,]0,1[$ containing the Merton fraction and then trade back to some fixed fraction in $]A,B[$ near the Merton fraction and restart the process. This optimal behavior was described for the Kelly criterion in \cite{MoPl95}. Bielecki and Pliska then generalized these results 
in several ways by characterizing the optimal strategies in terms of solutions to quasi-variational inequalities in \cite{BP00}, while existence and uniqueness results for solutions to these HJB-equations in quasi-variational form were established by Nagai in \cite{N04} by applying a coordinate transformation to avoid degeneracy.

Despite of the feasibility of the optimal trading strategies under fixed transaction costs, the cost structure seems rather unrealistic from the practitioner's point of view. To overcome this problem a combination of fixed and proportional transaction costs was suggested. In some cases, as in \cite{EH88}, \cite{K98}, and \cite{belak2016utility}, the fixed component of the transaction costs is a constant amount not depending on the wealth. Here, the authors derive solutions for the discounted consumption criterion for the linear utility, asymptotically for the exponential utility, and existence of optimal strategies, resp. Asymptotic results for vanishing fixed costs
were recently obtained in \cite{AltaroviciEtAl2015}, \cite{F16}, \cite{melnyk2017small}, where in particular the last paper considers a generalization of our setting.
These results are, however, different in nature to the results obtained in this article as the authors do not show convergence of the optimal strategies, but construct asymptotically optimal strategies, which are obviously suboptimal for all fixed positive costs. We also want to mention \cite{guasoni2015nonlinear} where a market with price-impact proportional to a power of the order flow is considered and asymptotically explicit formulas are obtained. The precise results and techniques, however, are quite different to ours.

Our attention in the present work will be focused on the cases where the fixed component of the transaction costs is a fixed proportion of the investor's wealth, as described above, under the maximization of the asymptotic growth rate.
The trading strategies are therefore of impulsive form and the costs paid by the investor at time $\tau_n$ are of the size
\begin{equation*}
c(V_{\tau_n},\eta_n)=\delta V_{\tau_n}+\gamma|\eta_n|,
\end{equation*}
where $\delta\in\,]0,1[$ denotes the fixed part of the costs, while the proportional part is described by $\gamma{\geq 0}$ with $\gamma<1-\delta$.\\
Inspired by the results from \cite{MoPl95} for purely fixed costs, Irle and Sass introduced in \cite{IS05} the class of the so-called \textit{constant boundary strategies} for the one-dimensional case and proved their optimality via a solution to the HJB-equation in quasi-variational form, rigorously constructed in \cite{IS06}. These strategies can be described via four constants $a<\alpha\leq\beta<b\in\,]0,1[$ such that $[a,b]$ is the continuation region, or the no-trade region, of the risky fraction process that is then restarted via trading in $\alpha$, respectively in $\beta$, when it reaches the boundary in $a$, respectively in $b$.\\
These optimal constants may easily be computed numerically, of course depending on the parameters of the Black-Scholes market and on the transaction costs, see  \cite{IS05},  \cite{IS06}, so these results provide a semi-explicit solution to the problem with fixed and proportional transaction costs.

For $d\geq 2$ stocks it seems very difficult to obtain results on the geometric structure of the optimal strategy, but results on the existence of an optimal strategy as solutions to the corresponding HJB-equation were obtained by Tamura. He adapted methods from \cite{N04} to the case $d=1$ in \cite{T06} and to the general case $d\geq 1$ in \cite{T08}, and derived an optimal strategy via a solution of the HJB-equation in quasi-variational form. This solution is obtained by perturbation methods and results on quasi-variational inequalities with discount factors from \cite{BL82} and \cite{BL84}, so does not provide any ready insight on the geometric structure of the no-trade region and on the way, how trading back should be done. A rough guess would see optimal strategies as given by two surfaces in $d$-dimensional space: As soon as the outer surface is reached by the risky fraction process, trading back occurs to some point at the inner surface. The performance of such strategies was investigated in \cite{IP09} but the possible optimality of such strategies was not considered here.

In this paper, we take as a starting point the model of \cite{IS06} with transaction costs $c(\nu,\eta) = \delta\nu + \gamma|\eta|$. As described above, an optimal strategy is given by the levels $a<\alpha\leq \beta<b$:

\begin{figure}[ht]
	\begin{center}
		\begin{tikzpicture}[scale=1.5]
		\draw (-3,1) -- (3,1);
		\draw (-3.3,1) node {$b$};
		\draw (-3,0.7) -- (3,0.7);
		\draw (-3.3,.7) node {$\beta$};
		\draw (-3,0) -- (3,0);
		\draw (-3.3,0) node {$\alpha$};
		\draw (-3,-0.3) -- (3,-0.3);
		\draw (-3.3,-0.3) node {$a$};
		\draw[->] (-2,1) -- (-2,0.72);
		\draw[->] (-1,1) -- (-1,0.72);
		\draw[->] (2,-.3) -- (2,-0.03);
		\draw[->] (1,-.3) -- (1,-0.03);
		\end{tikzpicture}
	\end{center}
\end{figure}

Trading back from $b$ to $\beta$ or $a$ to $\alpha$, resp., causes fixed costs and proportional costs. For $\delta$ becoming smaller, the punishment for frequent trading with small volume gets less, so we should expect the differences $b-\beta$ and $\alpha-a$ in the optimal boundaries to become small. The optimally controlled risky fraction process will have an increasing number of small trades from $b$ to $\beta$ and $a$ to $\alpha$ resp., so that in the limit, as $\delta$ tends to zero for fixed $\gamma$, a diffusion with reflecting boundaries should turn up.

In this article we shall give the precise mathematical statements and proofs for these heuristics thus obtaining a connection between the different transaction cost models. We show the convergence of the model, when the fixed costs $\delta$ tend to zero, to a model with only proportional costs corresponding to that from \cite{TKA88}, by proving the convergence of the optimal boundaries, asymptotic growth rates, and optimal risky fraction process. The obtained result can also be of interest when only considering the limiting model with pure proportional costs. As the optimal strategies are of reflection type, a discretization is needed to make them realizable. Then, constant boundary strategies with upper- and lower boundaries close to each other are plausible candidates and the results of this paper provide a rigorous framework to justify this. We want to remark here that the -- on a first view -- artificial fixed costs can be interpreted as opportunity costs for using discretized reflection strategies, see \cite{christensen2012optimal}.

The article is organized in the following manner. For the sake of completeness, we introduce in Section \ref{Portfolio} the model with fixed and proportional costs from \cite{IS06} and combine the results derived in \cite{IP09} related to the representation of the maximization problem via risky fraction processes and the subsequent application of coordinate transformation to the whole space $\R$. 

In Section \ref{280911i}, we introduce a suitable model with only proportional costs similar to that of \cite{TKA88} and state a verification theorem for the corresponding HJB-approach.

The main results can then be found in Section \ref{nofix}. We treat the convergence in case of vanishing fixed costs $\delta$
and establish that the optimal boundaries $$a=a(\delta)<\alpha=\alpha(\delta)\leq\beta=\beta(\delta)<b=b(\delta)\in\,]0,1[$$
converge to limits
$$\lim\limits_{\delta\to 0}a(\delta)=A=\lim\limits_{\delta\to 0}\alpha(\delta),~\lim\limits_{\delta\to 0}\beta(\delta)=B=\lim\limits_{\delta\to 0}b(\delta)$$
for vanishing fixed costs $\delta$. Furthermore, the values are proved to converge and 
we obtain the optimality of the reflected risky fraction process on $[A,B]$ for the limit model with $\delta=0$ introduced in Section \ref{280911i}. As a byproduct, this yields the uniqueness of the optimal boundaries $A,B$ in the problem with pure proportional transaction costs. Furthermore, the weak convergence of the risky fraction processes to the optimal reflected risky process is inferred.
These results are derived from a careful analysis of the convergence of the solutions to the corresponding HJB-equations. It should be noted that the way to prove convergence of the optimal impulse strategies to a singular control strategy for vanishing fixed transaction costs is rather general in nature and could be applied to other classes of ergodic problems, too. However, the one-dimensional nature of the underlying processes is used in our approach, so that it is not straightforward to, e.g., treat problems with constant costs (not depending on the wealth) with this technique. 


Looking at the transaction costs $c(\nu,\eta) = \delta\nu + \gamma|\eta|$, the second natural problem is to look at the convergence as $\gamma$ tends to zero for fixed $\delta$. For $\gamma$ becoming smaller, trading with large volumes inflicts less costs, so that we expect fewer trades with larger volumes. 
It is an immediate conjecture that, as $\gamma$ tends to zero for fixed $\delta$, the optimal strategy tends to the optimal strategy in the model with $\gamma$ as treated in \cite{MoPl95}, where $\alpha=\beta$. That this conjecture is true may be shown by methods similar to those in this article but the precise proofs turn out to differ substantially from the arguments in this paper. In order not to overburden the contents here, we refer this to a further paper; the key concepts may be found in \cite{L12}. That thesis also provides more details on the results in Sections \ref{Portfolio} and \ref{280911i}.


\section{Portfolio model with fixed and proportional transaction costs}\label{Portfolio}
\subsection{Description of the model and preliminary results}
In this subsection we introduce the portfolio model with fixed and proportional transaction costs from \cite{IS05}, \cite{IS06}, and \cite{IP09}. 
We consider a financial market model with one bond $B$ and one stock $S$ satisfying
\begin{align}\label{030910a}
\begin{split}
&dB_t=rB_tdt,~~ B_0=b_0,\;\;
dS_t=\mu S_tdt+\sigma S_tdW_t,~~ S_0=s_0
\end{split}
\end{align}
with constant starting values $s_0, ~b_0$ and a Brownian motion $W$ adapted to the standard filtration $(\mathcal{F}_t)_{t\geq 0}$ on a probability space $(\Omega,\mathcal{F},P,(\mathcal{F}_t)_{t\geq 0})$. $r\geq0$ denotes the interest rate, $\mu\in\R$ denotes the trend and $\sigma>0$ is the volatility.
By $X_t$, respectively $Y_t$, we denote the amount of money the investor has invested in the bond, respectively the stock, at time $t$ and define $V_t:=X_t+Y_t$ to be the \textit{wealth} or \textit{portfolio value}.
We do not allow short selling or borrowing and therefore have
\begin{equation}\label{091211a}
X_t\geq 0\mbox{ and }Y_t\geq 0.
\end{equation}
Assuming $V_t>0$, we can define the \textit{risky fraction process} $h_t$ by
\begin{equation}
h_t:=\frac{Y_t}{V_t}\mbox{ and }h_t^0:=\frac{X_t}{V_t}\mbox{ for every }t\geq 0.
\end{equation}
The assumption $V_t>0$ above is not really restrictive, since for all trading strategies considered in the following it will be a consequence of \eqref{091211a}.
In our future models the investor will face fixed and proportional transaction costs, therefore sensible trading can only occur at discrete times $\tau_n,\,n\in\N_0$, and we denote by $\eta_n$ the transaction volume in the stock at time $\tau_n,\,n\in\N_0$. Hence the natural class of trading strategies are \textit{impulse control strategies} $\tau_n,\,n\in\N_0$ satisfying
\begin{equation}\label{151210a}
\eta_n=0 \mbox{ on }\{\tau_n=\infty\}\mbox{ for all }n\in\N.
\end{equation}

Here, the condition $\tau_0=0$ introduced in the definition of impulse control strategies is just for technical reasons and since trading in $\infty$ will have no effect on the growth rate in \eqref{020910f}, we restrict ourselves to \eqref{151210a} for simplicity.
We will later have to define the class of \textit{admissible} trading strategies
\begin{equation}
\mathcal{A}:=\{K\,:\,K\mbox{ is an admissible impulse control strategy}\},
\end{equation}
which of course will depend on the transaction costs and how they are paid. For every $K\in\mathcal{A}$ we consider the corresponding wealth process $(V_t^K)_{t\geq 0}$ and the expected growth rate
\begin{equation}\label{020910f}
J(K):=\liminf\limits_{T\to\infty}\frac{1}{T}E\bigl(\log V_T^K\bigr).
\end{equation}
The general aim is to maximize $J$ over all trading strategies, i.e.$~$to find the value
\begin{equation}\label{271011a}
\rho:=\sup\limits_{K\in\mathcal{A}}J(K)
\end{equation}
and the corresponding maximizing trading strategy $K^*$, if existing.
We assume the investor faces investment fees, given by the cost function
\begin{equation}\label{cost}
c:[0,\infty[\times \R\rightarrow[0,\infty[,~(x,\eta)\mapsto \delta x+\gamma|\eta|,
\end{equation}
where $\delta\in\,]0,1[$ denotes the fraction of the portfolio value (fixed costs) and $\gamma\in[0,1-\delta[$ are the fractions of the transaction volume $\eta$ (proportional costs).
For impulse control strategies $(\tau_n,\eta_n)_{n\in\N_0}$ 
we define that after the $n$-th trading the assets become
\begin{align}\label{030910b}
\begin{split}
X_{\tau_n}&=X_{\tau_n-}
-\eta_n-c(V_{\tau_n-},\eta_n),\;\;
Y_{\tau_n}=Y_{\tau_n-}+\eta_n,
\end{split}
\end{align}
on $\{\tau_n<\infty\}$ and hence
\begin{align}\label{120111b}
\begin{split}
V_{\tau_n}&=V_{\tau_n-}-c(V_{\tau_n-},\eta_n),\;\;
h_{\tau_n}=V_{\tau_n}^{-1}Y_{\tau_n}.
\end{split}
\end{align}
Between the trading times the processes are supposed to evolve according to \eqref{030910a}, i.e.$~$the number of bonds or stocks held by the investor has to be constant. Since we do not allow short selling or borrowing we can now define admissible trading strategies.

\begin{defi}\label{zulmonStrat}
	An impulse control strategy $K=(\tau_n,\eta_n)_{n\in\N_0}$ is an \textit{admissible monetary strategy} if the
	corresponding processes from \eqref{030910b} satisfy
	\begin{align}\label{030910c}
	\begin{split}
	X_{\tau_n},Y_{\tau_n}&\geq 0~ \mbox{ on }\{\tau_n<\infty\}\mbox{ for all }n\in\N_0.
	\end{split}
	\end{align}
	Since admissibility clearly depends on the starting values $v_0,h_0$, we define
	\begin{equation}
	\mathcal{A}_{v_0,h_0}:=\{K\,:\,K\mbox{ is an admissible monetary strategy for }v_0,h_0\}
	\end{equation}
	and we will sometimes write $X^{K,v_0,h_0},Y^{K,v_0,h_0}$ and $V^{K,v_0,h_0}$ for the processes if needed.
\end{defi}


%

Given an admissible monetary strategy $(\tau_n,\eta_n)_{n\in\N_0}$ and the corresponding risky fraction process $(h_t)_{t\geq 0}$ we get
\begin{equation}\label{151210b}
\xi_n:=h_{\tau_n}\in[0,1]~\mbox{ on }\{\tau_n<\infty\}\mbox{ for all }n\in\N_0.
\end{equation}

\begin{defi}\label{zulpropStrat}
	An impulse control strategy $\widetilde{K}=(\tau_n,\xi_n)_{n\in\N_0}$ is an \textit{admissible proportional strategy} if $\xi_n\in[0,1]$ on $\{\tau_n<\infty\}$ for all $n\in\N_0$.
	In analogy to Definition \ref{zulmonStrat} we define
	\begin{equation}
	\widetilde{\mathcal{A}}:=\left\{\widetilde{K}\,:\,\widetilde{K}\mbox{ is an admissible proportional strategy}\right\}.
	\end{equation}
\end{defi}

\begin{bem}\label{030910f}
	If we define $\xi_n\equiv 0$ on $\{\tau_n=\infty\}$ in \eqref{151210b}, we can see that an admissible proportional strategy can by deduced from an admissible monetary strategy. In fact, there is a one-to-one correspondence between these two kinds of strategies, see Lemma 2.3 and Theorem 2.4 in \cite{IP09}.
\end{bem}

%

We can reformulate the cost function associated to risky fractions as follows, see \cite{IS06}.
\begin{prop}\label{120111}
	There exists a constant $\kappa\in\,]0,1[$ and a Lipschitz continuous function
	\begin{equation}\label{120111d}	C:[0,1]^2\rightarrow[0,\kappa],
	\end{equation}
	such that for every admissible proportional strategy $\widetilde{K}=(\tau_n,\xi_n)_{n\in\N_0}$
	\begin{equation}\label{120111c}	V_{\tau_n}=(1-\delta)(1-C(h_{\tau_n-},\xi_n))V_{\tau_n-}
	\end{equation}
	holds on $\{\tau_n<\infty\}$ for every $n\in\N_0$. More, explicitly,
	\begin{equation}\label{150211b}
	\widehat{C}(h,\xi):=(1-\delta)(1-C(h,\xi))=\begin{cases}
	\frac{1-\delta+\gamma h}{1+\gamma \xi},&\;\xi\geq \frac{h}{1-\delta},\\
	\frac{1-\delta-\gamma h}{1-\gamma \xi},&\;\xi< \frac{h}{1-\delta}.
	\end{cases}
	\end{equation}
\end{prop}

As described before, the main objective is to find the optimal growth rate
\begin{equation}\label{eq:rho}
\rho_{v_0,h_0}=\sup\limits_{K\in\mathcal{A}_{v_0,h_0}}J_{v_0,h_0}(K)
=\sup\limits_{K\in\mathcal{A}_{v_0,h_0}}
\liminf\limits_{T\to\infty}\frac{1}{T}E\left(\log V_T^{K,v_0,h_0}\right),
\end{equation}
which we later prove to be independent of $v_0,h_0$, and a corresponding maximizing monetary strategy $K^*_{v_0,h_0}$, if existing, satisfying
\begin{equation}
\rho_{v_0,h_0}=J_{v_0,h_0}\left(K^*_{v_0,h_0}\right).
\end{equation}
Remark \ref{030910f} shows that, by setting $\tilde{J}_{v_0,h_0}(\widetilde{K}):=J_{v_0,h_0}(K)$, we can do this by studying
\begin{equation}
\rho_{v_0,h_0}=\sup\limits_{\widetilde{K}\in\widetilde{\mathcal{A}}}
\tilde{J}_{v_0,h_0}\bigl(\widetilde{K}\bigr)
=\sup\limits_{\widetilde{K}\in\widetilde{\mathcal{A}}}
\liminf\limits_{T\to\infty}\frac{1}{T}E\left(\log V_T^{\widetilde{K},v_0,h_0}\right).
\end{equation}

\subsection{A transformation of the problem}\label{reform}
In \cite{T08} a coordinate transformation $\psi$, already introduced in \cite{N04} for a model with only fixed transaction costs, is applied to the risky fraction process $h$ in order to avoid degeneracy at the boundary of the state space. The resulting diffusion $\psi(h)$ is then of a much easier structure and the transformed problem can be solved via the Hamilton-Jacobi-Bellman approach. 
We introduce the bijective transformations
\begin{equation}\label{psi}
\psi:]0,1[\rightarrow\R,~h\mapsto \log h-\log(1-h),\;\;	\varphi:=\psi^{-1}:\R\rightarrow]0,1[,~y\mapsto
\frac{\exp y}{1+\exp y}.
\end{equation}
Note that $\phi$ is Lipschitz continuous.
We define
\begin{equation}\label{080910d}
\bar{f}:\R\rightarrow \R,~y\mapsto f(\varphi(y))
\end{equation}
and the corresponding cost function
\begin{equation}\label{080910e}
\overline{C}(y,\zeta)=\widetilde{C}(\varphi(y),\varphi(y+\zeta)-\varphi(y)):=\log(1-\delta)+\log(1-C(\varphi(y),\varphi(y+\zeta))).
\end{equation}

Admissible trading strategies are now just common impulse control strategies and hence independent of the initial values $v_0,h_0$.
Let $v_0>0,\,\widetilde{v}_0>0,\,h_0\in[0,1]$ and $y_0\in\R$. It is not hard to see that the optimal growth rates $\rho_{v_0,h_0}$
and $\widetilde{\rho}_{\widetilde{v}_0,y_0}$
are independent of the initial values $v_0,\widetilde{v}_0,h_0,y_0$ and satisfy
$\rho_{v_0,h_0}=\widetilde{\rho}_{\widetilde{v}_0,y_0}.$
For some $v_0>0,\,y_0\in\R^d$, it suffices to maximize over all admissible strategies $\overline{K}=(\tau_n,\zeta_n)_{n\in\N_0}$
\begin{equation*}
\widetilde{J}_{v_0,y_0}\bigl(\overline{K}\bigr)=r+\liminf\limits_{T\to\infty}
\frac{1}{T}E\Biggl(\int\limits_0^T\widetilde{f}(y_{s-})ds+
\sum\limits_{k=0}^\infty\overline{C}(y_{\tau_k-},\zeta_k)
\mathbbm{1}_{\{\tau_k\leq T\}}\Biggr).
\end{equation*}


\subsection{Optimal Strategies}
The maximization problem \eqref{020910f} was solved by Irle and Sass in \cite{IS06}. 
Our main purpose in this subsection is a brief introduction of the approach and the results from \cite{IS06} that are needed later. 
The main tool is the Hamilton-Jacobi-Bellman approach, i.e., they explicitly solved the quasi-variational inequality (QVI)
\begin{equation}\label{121011s}
\max\bigl\{Du+f-l,\M u-u\bigr\}=0,
\end{equation}
where
$\M u(x):=\sup\limits_{y\in[0,1]} u(y)+\widetilde{C}(x,y)$ denotes the maximum operator and 
\begin{align}
D&:=x(1-x)\bigl(\mu-r-\sigma^2x\bigr)\frac{d}{dx}
+\frac{1}{2}\sigma^2x^2(1-x)^2\frac{d^2}{dx^2}.\label{eq:generator}
\end{align}
\begin{defi}\label{121011l}
	An impulse control strategy $\widetilde{K}=(\tau_n,\xi_n)_{n\in\N_0}$ is a \textit{(proportional) constant boundary strategy} if there exist $a<\alpha\leq\beta<b\in\, ]0,1[$ such that $\xi_0\in\,]a,b[$ and
	\begin{equation*}
	\tau_{n}=\inf\bigl\{t> \tau_{n-1}\,:\,h_t^{n-1}\notin\,]a,b[\bigr\},
	~~\xi_{n}=
	\begin{cases}
	\alpha, &h_{\tau_{n}}^{n-1}=a,\\
	\beta, &h_{\tau_{n}}^{n-1}=b,
	\end{cases}
	\end{equation*}
	where $h^{n-1}$ is the corresponding risky fraction process.
	We will refer to $\widetilde{K}$ as a \textit{proportional constant boundary strategy given by} $(a,\alpha,\beta,b)$.
\end{defi}


%

In order to find the solution $u$, a modified version of the cost function is used in \cite{IS06}.

\begin{defi}\label{121011i}
	We define the modified cost function $\Gamma$ for all $x,y\in\,]0,1[$ by
	\begin{equation}\label{121011m}
	\Gamma(x,y):=
	\begin{cases}
	\log\left(\frac{1-\delta+\gamma x}
	{1+\gamma y} \right), &y> x,\\
	\log\left(\frac{1-\delta-\gamma x}
	{1-\gamma y}\right), &y\leq x.
	\end{cases}
	\end{equation}
\end{defi}


Now we are able to state a collection of the results from \cite{IS06}, which we will use as a reference.

\begin{satz}\label{200911h}
	Let $\delta\in\,]0,1[,\,\gamma\in[0,1-\delta[$, $D$ as in \eqref{eq:generator},
	\begin{equation}\label{200911j}
	f:[0,1]\rightarrow\R,~h\mapsto -\frac{1}{2}\sigma^2h^2+(\mu-r)h,
	\end{equation}
	$\Gamma$ as in \eqref{121011m} and let the Merton fraction $\hat{h}:=\frac{\mu-r}{\sigma^2}$ satisfy $\hat{h}\in\,]0,1[$.
	Then there exist constants $l>0$ and $a<\alpha\leq x_0\leq\beta<b\in\,]0,1[$ such that the function $u\in C^1([0,1],\R)$ defined by
	
	\begin{equation}
	u(x):=
	\begin{cases}
	\Gamma(x,\alpha), &x\leq a,\\
	u(a)+\int\limits_a^xg(y,x_0,l)dy, &a<x\leq b,\\
	u(\beta)+\Gamma(x,\beta), &x>b,
	\end{cases}
	\end{equation}
	where
	\begin{equation}\label{210911a}
	g(x,x_0,l):=\begin{cases}
	\frac{\left(\frac{1-x}{x}\right)^{2\hat{\eta}-1}}
	{x(1-x)f(1)}
	\Bigl(\bigl(l-xf(1)\bigr)
	\bigl(\frac{x}{1-x}\bigr)^{2\hat{\eta}-1}
	-\bigl(l-x_0f(1)\bigr)
	\bigl(\frac{x_0}{1-x_0}\bigr)^{2\hat{\eta}-1}
	\Bigr), &\hat{\eta}\neq\frac{1}{2},\\
	\frac{1}{x(1-x)}\int\limits_{x_0}^x
	\bigl(\frac{l}{\sigma^2}-\frac{y}{2}(1-y)\bigr)
	\frac{2}{y(1-y)}dy, &\hat{\eta}=\frac{1}{2},
	\end{cases}
	\end{equation}
	has the following properties:
	\begin{enumerate}[{(}i{)}]
		\item Every constant boundary strategy $\widetilde{K}$ given by the constants $(a,\alpha,\beta,b)$ in the sense of Definition \ref{121011l} is optimal for the modified cost function $\Gamma$ with the optimal value
		$\tilde{J}\bigl(\widetilde{K}\bigr)=r+l=\rho$.
		\item If $a\leq\alpha(1-\delta)$, then $\widetilde{K}$ is also optimal for the original cost function $\widetilde{C}$.
		\item $Du(x)+f(x)-l\leq 0$ and $\,u(y)-u(x)+\Gamma(x,y)\leq 0$ for all $x,y\in\,[0,1]$.
		\item $Du(x)+f(x)-l=0$ for all $x\in\,]a,b[.$
		\item $u'(\beta)=-\frac{\partial}{\partial y}\Gamma(b,y)|_{y=\beta}=-\frac{\gamma}{1-\gamma\beta}$ and $\,u'(\alpha)=-\frac{\partial}{\partial y}\Gamma(a,y)|_{y=\alpha}=\frac{\gamma}{1+\gamma\alpha}$.	
		\item $u'(b)=\frac{\partial}{\partial x}\Gamma(x,\beta)|_{x=b}=-\frac{\gamma}{1-\delta-\gamma b}$ and $\,u'(a)=\frac{\partial}{\partial x}\Gamma(x,\alpha)|_{x=a}=\frac{\gamma}{1-\delta +\gamma a}$.	
		\item $\frac{\partial}{\partial x}g(x,x_0,l)|_{x=\beta}<
		-\frac{\gamma^2}{(1-\gamma\beta)^2}$ and $\,
		\frac{\partial}{\partial x}g(x,x_0,l)|_{x=b}>
		-\frac{\gamma^2}{(1-\delta-\gamma b)^2}.$
		\item $\frac{\partial}{\partial x}g(x,x_0,l)|_{x=\alpha}<
		-\frac{\gamma^2}{(1+\gamma\alpha)^2}$ and $\,
		\frac{\partial}{\partial x}g(x,x_0,l)|_{x=a}>
		-\frac{\gamma^2}{(1-\delta+\gamma a)^2}.$
		\item $g(x,x_0,l)<\frac{\gamma}{1-\delta +\gamma x}$ on $]0,a[$ and 			$\,g(x,x_0,l)>-\frac{\gamma}{1-\delta -\gamma x}$ on $]b,1[$.
		\item $\alpha=\beta$ for $\gamma=0$ and $\alpha<x_0<\beta$ for $\gamma>0$.
	\end{enumerate}
\end{satz}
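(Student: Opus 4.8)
The statement is the constant-boundary construction of \cite{IS06}, and the plan is to reprove it by exhibiting an explicit $C^1$-solution $u$ of the quasi-variational inequality \eqref{121011s} with the modified cost $\Gamma$ of \eqref{121011m} in place of $\widetilde C$ in the maximum operator $\M$, and then to read off (i)--(x) from $u$ together with a verification argument.

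I would first solve the equation in the (not yet located) continuation region $]a,b[$, where $Du+f-l=0$ must hold. Writing $v:=u'$, this is the first-order linear ODE
\begin{equation*}
\tfrac12\sigma^2x^2(1-x)^2\,v'(x)+x(1-x)(\mu-r-\sigma^2x)\,v(x)=l-f(x),
\end{equation*}
whose integrating factor is $\bigl(\tfrac{x}{1-x}\bigr)^{2\hat\eta}(1-x)^2$. Integrating from a reference point $x_0$, fixing the constant of integration by $v(x_0)=0$, and using $f(1)=\sigma^2\bigl(\hat\eta-\tfrac12\bigr)$ --- so that the dichotomy $\hat\eta\neq\tfrac12$ versus $\hat\eta=\tfrac12$ in \eqref{210911a} is precisely $f(1)\neq0$ versus $f(1)=0$ --- one recovers exactly the function $g(\cdot,x_0,l)$ of \eqref{210911a}. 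Thus on $]a,b[$ one sets $u'=g(\cdot,x_0,l)$; this gives (iv) by construction, and $x_0$ becomes the (interior) critical point of $u$.

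Next I would pin down the free boundaries. Outside $]a,b[$ the impulse part of the QVI is active, with optimal trade target $\alpha$ from $[0,a]$ and $\beta$ from $[b,1]$, so $u=\Gamma(\cdot,\alpha)$ on $[0,a]$ (having normalised $u(\alpha)=0$) and $u=u(\beta)+\Gamma(\cdot,\beta)$ on $[b,1]$. Imposing $C^1$-pasting at $a$ and $b$ gives the values of $u'(a),u'(b)$ in (vi); the first-order conditions $\partial_y\bigl(u(y)+\Gamma(a,y)\bigr)|_{y=\alpha}=0$ and $\partial_y\bigl(u(y)+\Gamma(b,y)\bigr)|_{y=\beta}=0$ give the values of $u'(\alpha),u'(\beta)$ in (v); and value-matching at the boundaries together with the normalisation closes a square system of equations for $(a,\alpha,\beta,b,x_0,l)$. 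Inserting the explicit forms of $g$ from \eqref{210911a} and of $\Gamma$ from \eqref{121011m}, I would prove solvability --- with the required ordering $a<\alpha\le x_0\le\beta<b$ in $]0,1[$ and $l>0$ --- by eliminating $\alpha,\beta$ via the first-order conditions, reducing to a low-dimensional problem in $(l,x_0)$ (or in $(a,b)$), and then invoking continuity and the monotone dependence of the remaining quantities on $l$ together with an intermediate-value argument, using the sign of $g'$ near its zero. \textbf{I expect this existence step to be the main obstacle}; everything else is verification.

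Finally I would verify that $u$ does what is claimed. Item (iv) holds by construction; the remaining parts of (iii) on $[0,a]$ and $[b,1]$ follow by comparing $g$ and its derivative near $a$ and $b$ with the first and second $x$-derivatives of $\Gamma$ --- which is the content of (vii)--(ix) --- supplemented by the fact that $\sup_{y\in[0,1]}\bigl(u(y)+\Gamma(x,y)\bigr)\le u(x)$, with equality exactly on $[0,a]\cup[b,1]$, which in turn uses (v), (vi) and the monotonicity of $\Gamma(x,\cdot)$. Then, by Dynkin's formula applied to $u(h_t)$ along the controlled risky fraction process $h$ and the inequalities in (iii), one obtains $\tilde{J}(\widetilde K)\le r+l$ for every admissible $\widetilde K$, whereas for a constant boundary strategy given by $(a,\alpha,\beta,b)$ all inequalities become equalities --- the process lives in $[a,b]$, where (iv) holds, and the boundary jumps are governed by (v), (vi) --- so $\tilde{J}(\widetilde K)=r+l=\rho$, which is (i). Since $\widetilde C\le\Gamma$ pointwise with equality whenever the post-trade fraction $y$ satisfies $y\le x$ or $y\ge x/(1-\delta)$, the hypothesis $a\le(1-\delta)\alpha$ forces both optimal trades into this agreement region (the trade from $b$ to $\beta<b$ always lies there), so the same $u$ solves the QVI with the genuine cost $\widetilde C$ and the same argument yields (ii). For (x): when $\gamma=0$ the function $\Gamma$ is constant, so (v) forces $u'(\alpha)=u'(\beta)=0$ and hence $\alpha=x_0=\beta$ by uniqueness of the zero of $u'=g$ in $]a,b[$; when $\gamma>0$, (v) gives $u'(\alpha)=\tfrac{\gamma}{1+\gamma\alpha}>0$ and $u'(\beta)=-\tfrac{\gamma}{1-\gamma\beta}<0$, placing $\alpha$ strictly left and $\beta$ strictly right of the critical point $x_0$, i.e.\ $\alpha<x_0<\beta$.
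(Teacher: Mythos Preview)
The paper does not give its own proof of this theorem; it simply records it as a compilation of results from \cite{IS06} and writes ``Cf.\ \cite{IS06}, pp.~929--936 and Remark 9.1.'' Your outline is exactly the construction carried out in that reference: one solves the first-order linear ODE for $u'$ on the continuation region with the integrating factor $\bigl(\tfrac{x}{1-x}\bigr)^{2\hat h}(1-x)^2$ (yielding the explicit $g$ of \eqref{210911a}), imposes $C^1$-pasting at $a,b$ to obtain (vi) and the optimal-target first-order conditions at $\alpha,\beta$ to obtain (v), matches values to close the system for $(a,\alpha,\beta,b,x_0,l)$, and then verifies the QVI and optimality by a Dynkin-type argument. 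Your treatment of (ii) via $\widetilde C\le\Gamma$ with equality off the strip $x<y\le x/(1-\delta)$, and of (x) via the sign of $u'(\alpha),u'(\beta)$, is also the argument used there.

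Two remarks. First, the existence of the six constants with the ordering $a<\alpha\le x_0\le\beta<b$ and $l>0$, which you correctly flag as the main obstacle, is in \cite{IS06} the substantive part of the proof: it occupies several pages of monotonicity and intermediate-value arguments tailored to the explicit form of $g$ (separately for $\hat\eta\gtrless\tfrac12$), and your one-sentence sketch (``eliminate $\alpha,\beta$, reduce to $(l,x_0)$, use continuity and monotone dependence on $l$'') is the right strategy but is far too thin to stand on its own. Second, since $u$ is only $C^1$ with $u''$ jumping at $a$ and $b$ (cf.\ (vii), (viii)), invoking ``Dynkin's formula applied to $u(h_t)$'' needs either a generalised It\^o--Tanaka/Meyer formula or a localisation argument; \cite{IS06} handles this, and you should flag it rather than pass over it.
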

\begin{proof}
	Cf.$~$\cite{IS06}, pp. 929-936 and Remark 9.1.\vspace{-3mm}
	
\end{proof}

\subsection{Optimal strategies in the transformed space}\label{251011a}
If we define $\bar{f}:=f\circ\varphi\,$ and the cost function $\overline{C}$ as in \eqref{080910e}, then every admissible strategy $\overline{K}=(\tau_n,\xi_n)_{n\in\N_0}\in\bar{\mathcal{A}}$ yields the expected growth rate
\begin{equation}\label{121011q}
\bar{J}\bigl(\overline{K}\bigr)
=r+\liminf\limits_{T\to\infty}\frac{1}{T}
E\Biggl(\int\limits_0^T\bar{f}(Y_{s-})ds+
\sum\limits_{k=0}^\infty\overline{C}(Y_{\tau_k-},\xi_k)
\mathbbm{1}_{\{\tau_k\leq T\}}\Biggr),
\end{equation}
where $Y=\psi(h)$ and $h$ is the corresponding risky fraction process. 
Writing 
\begin{align}
\overline{\M }v(x):=\sup\limits_{y\in\R}v(y)+\overline{C}(x,y-x),\;\;\Lu &:=\frac{1}{2}\sigma^2\frac{d^2}{dx^2}
+\Bigl(\mu-r-\frac{1}{2}\sigma^2\Bigr)\frac{d}{dx},\label{131011b}
\end{align}
we get the following connection:
If a smooth enough function $u:[0,1]\rightarrow \R$ is a solution to the QVI \eqref{121011s} on $[0,1]$, then $u\circ\varphi$ is a solution to the QVI
\begin{equation}\label{121011t}
\max\bigl\{\Lu  v+\bar{f}-l,\overline{\M }v-v\bigr\}=0
\end{equation}
on $\R$. Using that we can translate the results presented before to $\R$ with a corresponding notion of a constant boundary strategy.
The next Theorem is mainly the counterpart of Theorem \ref{200911h}. We omit the straightforward proof.

\begin{satz}\label{210611a}
	Let $\delta\in\,]0,1[,\,\gamma\in[0,1-\delta[$, $\Lu $ as in \eqref{131011b}, $\bar{f}=f\circ\varphi$ as in \eqref{080910d}, $\Gamma$ as in \eqref{121011m} and let the Merton fraction $\hat{h}:=\frac{\mu-r}{\sigma^2}$ satisfy $\hat{h}\in\,]0,1[$.
	Then there exist constants $l>0$ and $a<\alpha\leq\beta<b\in\R$ and a bounded continuous function $u:\R\rightarrow[0,\infty[$ satisfying
	\begin{enumerate}[{(}i{)}]
		\item $\alpha=\beta~\Leftrightarrow~\gamma=0$.
		\item Every constant boundary strategy $\overline{K}=\overline{K}(a,\alpha,\beta,b)=(\tau_n,\xi_n)_{n\in\N_0}$ is optimal for the modified cost function $\Gamma$ with optimal value $\bar{J}\bigl(\overline{K}\bigr)=r+l=\rho$.
		\item If $\varphi(a)\leq\varphi(\alpha)(1-\delta)$, then $\overline{K}$ is also optimal for the original cost function $\overline{\Gamma}$.
		\item $\max\{f(0),f(1)\}<l<f(\hat{h})$.
		\item For every $y\in\,]a,b[$ and $\sigma\leq\tau^\ast(y):=
		\inf\bigl\{t>0: X^y_t\notin\,]a,b[\bigr\}$, where $X^y$ is the diffusion corresponding to \eqref{131011b} starting in $y$, it holds $u(y)=E\left(\int\limits_0^\sigma \bigl(\bar{f}(X^y_s)-l\bigr)ds + u(X^y_\sigma)\right)$.
		\item $\|u\|_\infty\leq\|\Gamma\|_\infty<\infty,~u|_{[a,b]}\in C^2([a,b]),~u|_{\R\backslash[a,b]}\in C^2(\R\backslash [a,b])$.
		\item There is an $x_0\in\bigl\{y\in\R:\bar{f}(y)-l\geq 0\bigr\}\cap[\alpha,\beta]$ such that $u(x_0)=\|u\|_\infty$ and $\alpha<x_0<\beta~\Leftrightarrow~\gamma>0$.
		\item $\Lu u(x)+\bar{f}(x)-l\leq 0$ and $u(y)-u(x)+\Gamma(x,y)\leq 0$ for all $x,y\in\R$.
		\item $\Lu u(x)+\bar{f}(x)-l=0$ for all $x\in\,]a,b[.$
		\item $u(\beta)-u(x)+\Gamma(x,\beta)=0$ for all $x\geq b$ and $u(\alpha)-u(x)+\Gamma(x,\alpha)=0$ for all $x\leq a$.
		\item $u'(\beta)=-\frac{\partial}{\partial y}\Gamma(b,y)|_{y=\beta}=\frac{-\gamma\varphi'(\beta)}{1-\gamma\varphi(\beta)}$ and $u'(\alpha)=-\frac{\partial}{\partial y}\Gamma(a,y)|_{y=\alpha}=\frac{\gamma\varphi'(\alpha)}{1+\gamma\varphi(\alpha)}$.
		\item $u'(b)=\frac{\partial}{\partial x}\Gamma(x,\beta)|_{x=b}=\frac{-\gamma\varphi'(b)}{1-\delta-\gamma\varphi(b)}$ and $u'(a)=\frac{\partial}{\partial x}\Gamma(x,\alpha)|_{x=a}=\frac{\gamma\varphi'(a)}{1-\delta+\gamma\varphi(a)}$.
		\item For $\gamma>0$ $u$ is strictly increasing on $(-\infty,0[$ and strictly decreasing on $]0,\infty[$.
		\item For $\gamma=0$ $u$ is strictly increasing on $[a,\alpha]$, strictly decreasing on $[\alpha,b]$ and $u\equiv 0$ on $\R\setminus [a,b]$.
		\item $u\circ\psi$ differs only by a constant from its counterpart in Theorem \ref{200911h} while $l$,$\,\varphi(x_0)$, 																		$\varphi(a)$, $\,\varphi(\alpha)$, $\,\varphi(\beta)$ and $\varphi(b)$ are equal.
	\end{enumerate}
\end{satz}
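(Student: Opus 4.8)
The plan is to deduce Theorem \ref{210611a} from Theorem \ref{200911h} by transporting everything through the diffeomorphism $\varphi=\psi^{-1}$, so that essentially no new analysis is needed. I would take the data $l_o>0$, $a_o<\alpha_o\le x_o\le\beta_o<b_o\in\,]0,1[$ and $u_o\in C^1([0,1],\R)$ supplied by Theorem \ref{200911h}, and set $l:=l_o$, $a:=\psi(a_o)$, $\alpha:=\psi(\alpha_o)$, $x_0:=\psi(x_o)$, $\beta:=\psi(\beta_o)$, $b:=\psi(b_o)$ and $u:=u_o\circ\varphi+c$ on $\R$, with an additive constant $c$ chosen as follows: for $\gamma>0$ take $c=-\min_{[0,1]}u_o$ (so that $u\ge 0$), and for $\gamma=0$ take $c=-\log(1-\delta)$, which by the explicit form of $u_o$ off $[a_o,b_o]$ — there $u_o$ equals a boundary value plus the \emph{constant} $\Gamma(\cdot,\cdot)=\log(1-\delta)$, the two boundary values coinciding by the QVI — makes $u$ vanish on $\R\setminus[a,b]$, i.e.\ yields item (xiv). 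Since $\psi$ is a strictly increasing bijection, the ordering $a<\alpha\le\beta<b$ and the equivalences in (i) and (vii) transfer verbatim from item (x) of Theorem \ref{200911h}, and the remaining content of (vii) (that $x_0$ maximises $u$ and $\bar f(x_0)\ge l$) follows from $u_o'(x_o)=0$, $u_o''(x_o)\le 0$ and the equation $Du_o+f-l=0$ at the interior maximum. Boundedness of $u$ is clear because $\varphi(\R)=\,]0,1[$, with $\|u\|_\infty\le\|\Gamma\|_\infty$ after inserting the explicit formula for $u_o$; the piecewise $C^2$-regularity in (vi) follows from that of $u_o$ together with the smoothness of $\varphi$; and (xv) holds by construction, as $u\circ\psi=u_o+c$ and $\varphi(a)=a_o$, $\varphi(\alpha)=\alpha_o$, $\varphi(x_0)=x_o$, $\varphi(\beta)=\beta_o$, $\varphi(b)=b_o$, $l=l_o$.

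The one step with genuine content is the conjugacy of the generators: for $w\in C^2(]0,1[)$ one has $\Lu(w\circ\varphi)=(Dw)\circ\varphi$, which I would check by the chain rule using $\varphi'=\varphi(1-\varphi)$ and $\varphi''=\varphi(1-\varphi)(1-2\varphi)$; this identity is precisely why $\psi$ is used, being the scale map that removes the degenerate coefficient $\frac{1}{2}\sigma^2x^2(1-x)^2$ of $D$. With it, and with $\bar f=f\circ\varphi$, items (iii) and (iv) of Theorem \ref{200911h} turn into (viii) and (ix), where the symbol $\Gamma(x,y)$ in the transformed statement is read as the transported modified cost $\Gamma(\varphi(x),\varphi(y))$; item (x) is the defining formula for $u_o$ off $[a_o,b_o]$ read through $\varphi$; and the first-order smooth-fit identities (xi), (xii) come from items (v), (vi) of Theorem \ref{200911h} by the chain rule, e.g.\ $u'(\beta)=u_o'(\varphi(\beta))\,\varphi'(\beta)=-\frac{\gamma\varphi'(\beta)}{1-\gamma\varphi(\beta)}$. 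The monotonicity assertions (xiii), (xiv) follow from the signs of $u_o'$ (equivalently of $g$) together with $\varphi'>0$, and the Dynkin representation (v) of $u$ on $]a,b[$ is a consequence of (ix) and the boundary values of $u$, via It\^o's formula applied to $s\mapsto u(X^y_s)-\int_0^s(\bar f-l)(X^y_r)\,dr$ up to $\tau^\ast(y)$, the local martingale being a true martingale by boundedness of $u$ and of $\bar f-l$ on $[a,b]$.

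For the optimality statements (ii), (iii) I would use that, by Remark \ref{030910f} and the transformations of Subsection \ref{reform}, a constant boundary strategy $\overline K(a,\alpha,\beta,b)$ in the $Y=\psi(h)$-picture corresponds to the proportional constant boundary strategy given by $(\varphi(a),\varphi(\alpha),\varphi(\beta),\varphi(b))$ and has the same growth rate; then items (i), (ii) of Theorem \ref{200911h} yield $\bar J(\overline K)=r+l=\rho$ for the modified cost and, under $\varphi(a)\le\varphi(\alpha)(1-\delta)$, optimality for the original cost $\overline\Gamma$ as well. The single statement not obtained by plain transport is the bound (iv): the upper bound is immediate from $\bar f\le f(\hat h)$ (as $\hat h$ maximises $f$) and the nonpositivity of the cost increments in \eqref{121011q}, and is strict since the supremum is attained by a constant boundary strategy whose controlled $h$ is not a.s.\ constant equal to $\hat h$; for the lower bound, never trading with $h_0\in\{0,1\}$ is admissible with growth rate $r+f(0)$ resp.\ $r+f(1)$, giving $l\ge\max\{f(0),f(1)\}$, and strictness can be read off from the explicit value of $l$ in Theorem \ref{200911h}.

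The main obstacle, such as it is, is the generator conjugacy of the second paragraph — the only nontrivial ingredient, and one essentially forced by the definition of $\psi$ — together with the minor care needed to fix the constant $c$ in the degenerate case $\gamma=0$ and to argue (iv); everything else is pure bookkeeping, which is why the authors can dismiss the proof as straightforward.
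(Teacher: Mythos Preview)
Your proposal is correct and follows exactly the approach the paper intends: the paper omits the proof entirely, remarking only that Theorem \ref{210611a} is ``mainly the counterpart of Theorem \ref{200911h}'' with a straightforward proof, and your transport via $\varphi$ --- anchored by the generator conjugacy $\Lu(w\circ\varphi)=(Dw)\circ\varphi$ --- is precisely what ``straightforward'' means here. You in fact supply more detail than the paper (the choice of additive constant, the handling of (iv) and (v), the explicit chain-rule computations for (xi)--(xii)), all of which is sound.
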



\section{A model with only proportional transaction costs}\label{280911i}

Considering the excluded case $\delta=0$ in the models of the previous section formally leads to a model with pure proportional transaction costs. We, however, then leave the scope of these models, since frequent trading is not punished anymore leaving the impulse control strategies too restrictive. For $\delta=0$ continuous trading has to be considered and the optimal strategy is to keep the risky fraction process in some bounded interval $[A,B]\subset\,]0,1[$ with minimal effort, i.e.$~$the risky fraction process is reflected at the boundaries. 

To establish the convergences in the following section, we begin by introducing the model with only proportional transaction costs 
and show how the corresponding Hamilton-Jacobi-Bellman approach can be used to solve the optimization problem in this section. The most similar but not identical models are that of \cite{TKA88}, where transaction costs are only to be paid for selling stocks, and that of \cite{S97} or \cite{AST01}, where the investor's wealth is considered after liquidation. As, however, the results can basically be obtained using well-known arguments, we do not give the details here and refer the interested reader to \cite{L12}.
%

%

Let $V,X,Y$ and $h$ denote the processes describing the portfolio 
and suppose fixed starting values $(x_0,y_0)\in[0,\infty[^2$ with $v_0=x_0+y_0>0,~h_0=\frac{y_0}{v_0}\in\Delta$ are given.
The transaction costs are now given by a constant $\gamma\in\,]0,1[$ denoting the fraction of the transaction volume that has to be paid for every transaction.
Since without fixed transaction costs continuous trading is allowed, we no longer consider only impulse control strategies as trading strategies but càdlàg policies.

\begin{defi}\label{190911b}
	An \textit{investment policy} is a pair $(L,M)$ of nonnegative, nondecreasing and càdlàg processes $L=(L_t)_{t\geq0},M=(M_t)_{t\geq0}$. Here, the cumulative amount of money transferred from bond to stock up to the time $t$ is denoted by $L_t$, whereas $M_t$ is the cumulative amount of money that stocks are sold for up to time $t$.
\end{defi}
For an investment policy $(L,M)$ the portfolio-dynamics is given by
\begin{align}\label{220811a}
\begin{split}
dX_t
&=rX_{s-}ds+(1-\gamma)dM_t-(1+\gamma)dL_t,\;\;
dY_t
=\mu Y_{s-}ds+\sigma Y_{s-}dW_s-dM_t+dL_t.
\end{split}
\end{align}
We therefore call an investment policy $(L,M)$ \textit{admissible for }$(v_0,h_0)$ if there exist unique processes $(X_t)_{t\geq 0},(Y_t)_{t\geq 0}$ that satisfy \eqref{220811a}, $X_t,Y_t\geq 0$ and $V_t=X_t+Y_t>0$ for all $t\geq 0$. To clarify the dependence we will write $X^{(L,M)},Y^{(L,M)}$ and $V^{(L,M)}$ if needed.\\
Furthermore, we denote the set of all admissible investment policies as
$\mathcal{A}_{v_0,h_0}$.
The objective is again to find an optimal admissible policy $(L^*,M^*)$ that maximizes the expected growth rate
\begin{equation}\label{220811d}
\rho_{v_0,h_0}:=\sup\limits_{(L,M)\in\mathcal{A}_{v_0,h_0}}
J_{v_0,h_0}(L,M):=\sup\limits_{(L,M)\in\mathcal{A}_{v_0,h_0}}
\liminf\limits_{T\to\infty}\frac{1}{T}E\Bigl(\log\bigl(V_T^{(L,M)}\bigr)\Bigr).
\end{equation}

\begin{defi}\label{200911g}
	A continuous investment policy $(L,M)\in\mathcal{A}_{v_0,h_0}$ with $L_0=M_0=0$ is a \textit{control limit policy} for the limits $A<B\in\,]0,1[$ if the corresponding risky fraction process $h$ satisfies
	\begin{equation}\label{200911a}
	L_t=\int\limits_0^t\mathbbm{1}_{\{h_s=A\}}dL_s,~
	M_t=\int\limits_0^t\mathbbm{1}_{\{h_s=B\}}dM_s,~h_t\in[A,B]\mbox{
		for all }t\geq 0.
	\end{equation}
\end{defi}
The existence and uniqueness of control limit policies was proved in Theorem 9.2 in \cite{SS94}.\\
Finally, we can establish a verification theorem.
\begin{satz}\label{200911f}
	Let $f$ be as in \eqref{200911j} and let $D$ denote the generator from \eqref{eq:generator}. Assume there exists some $v\in C^2([0,1],\R)$ and $l\in\R$ such that for all $x\in [0,1]$
	\begin{enumerate}[{(}a{)}]
		\item $Dv(x)\leq -f(x)+l$,
		\item $v'(x)\leq \frac{\gamma}{1+x\gamma}$,
		\item $v'(x)\geq -\frac{\gamma}{1-x\gamma}.$
	\end{enumerate}
	Let further $A<B\in \,]0,1[$ and $h_0\in [A,B]$. Suppose $v$ additionally satisfies
	\begin{enumerate}[{(}i{)}]
		\item $Dv(x)= -f(x)+l$ for all $x\in[A,B],$
		\item $v'(x)= \frac{\gamma}{1+x\gamma}$ for all $x\in[0,A],$
		\item $v'(x)= -\frac{\gamma}{1-x\gamma}$ for all $x\in[B,1]$.
	\end{enumerate}
	Then the control limit policy $(L,M)$ for the limits $A,B$ is optimal.	
\end{satz}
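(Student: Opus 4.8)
The plan is to prove this verification theorem by the standard route for ergodic singular control problems: combine an Itô/Dynkin argument for the candidate function $v$ applied along an arbitrary admissible policy to get the upper bound $J_{v_0,h_0}(L,M)\le r+l$, and then show that for the specific control limit policy equality holds. First I would recall from the reformulation in Section~\ref{Portfolio} that $\log V_T^{(L,M)}$ can be written via the dynamics \eqref{220811a}: applying Itô's formula to $\log V_t = \log(X_t+Y_t)$ one obtains, because the proportional cost $\gamma$ enters the $X$-dynamics, a decomposition
\begin{equation*}
\log V_T^{(L,M)} = \log v_0 + \int_0^T\!\bigl(r + f(h_{s-})\bigr)\,ds + \sigma\!\int_0^T\! h_{s-}\,dW_s - \int_0^T\!\frac{\gamma}{1+\gamma h_{s}}\,\frac{dL_s^c}{V_s} - \int_0^T\!\frac{\gamma}{1-\gamma h_s}\,\frac{dM_s^c}{V_s} + (\text{jump terms}),
\end{equation*}
with $f$ as in \eqref{200911j}; the jump terms from the discontinuous parts of $L,M$ are handled by the mean value theorem and are always nonpositive since a discrete trade of size moves $h$ and costs a strictly positive fraction of wealth. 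This is essentially the content that underlies the representation \eqref{121011q}, so I would cite \cite{IS06}/\cite{IP09} for the bookkeeping rather than redo it.

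Second, I would apply Itô's formula to $v(h_t)$, where $h$ solves the SDE driven by \eqref{220811a}; since $h=Y/V$, its generator on the interior (between reflections) is exactly $D$ from \eqref{eq:generator}, and the singular controls $L,M$ enter $dh_t$ linearly with coefficients $\tfrac{1-h}{V}$ (for $dL$) and $-\tfrac{h}{V}$ (for $dM$), up to the usual second-order corrections that vanish because $L,M$ are of finite variation. Adding $v(h_T)-v(h_0)$ to the above display and using hypotheses (a), (b), (c), one checks that the $dL$-terms are bundled into $\bigl(v'(h)\tfrac{1-h}{V} - \tfrac{\gamma}{(1+\gamma h)V}\bigr)dL^c$ which is $\le 0$ by (b) (after simplifying the algebra relating $\tfrac{1-h}{V}dL$ to $\tfrac{dL}{V}$), and symmetrically the $dM$-terms are $\le 0$ by (c), while the drift $\int_0^T (f(h_{s-})+Dv(h_{s-}))\,ds \le lT$ by (a). The stochastic integrals are true martingales after localization because $v$ is bounded ($C^2$ on the compact $[0,1]$) and $h$ is bounded. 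Taking expectations, dividing by $T$, and letting $T\to\infty$ with $v$ bounded yields $J_{v_0,h_0}(L,M)\le r+l$ for every admissible $(L,M)$, hence $\rho_{v_0,h_0}\le r+l$.

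Third, for the control limit policy $(L^\ast,M^\ast)$ associated with $A,B$ (existence and uniqueness from Theorem~9.2 of \cite{SS94}), every inequality above becomes an equality: by (i) the drift term is exactly $lT$; by \eqref{200911a} the process $L^\ast$ increases only when $h=A$, where (ii) gives $v'(A)=\tfrac{\gamma}{1+\gamma A}$ so the $dL^\ast$-bracket vanishes, and similarly $M^\ast$ increases only at $h=B$ where (iii) forces the $dM^\ast$-bracket to vanish; there are no jumps since the policy is continuous. Hence $\log V_T^{(L^\ast,M^\ast)} = \log v_0 + v(h_0) - v(h_T) + (r+l)T + \text{martingale}$, and dividing by $T$, taking expectations and $\liminf_{T\to\infty}$ gives $J_{v_0,h_0}(L^\ast,M^\ast) = r+l$ because $v$ is bounded. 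Combined with the upper bound this shows $(L^\ast,M^\ast)$ is optimal.

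The main obstacle I anticipate is the careful justification of the singular/impulsive boundary terms, namely verifying that the local-time-type integrals $dL^c,dM^c$ really combine with $v'$ as claimed once one passes between the ``amount of money'' scale (in which \eqref{220811a} is written) and the ``risky fraction'' scale (in which $D$ and the conditions on $v$ live), and ensuring the jump contributions of discontinuous policies are controlled by the modified cost bound — this is precisely where the algebra of $\widehat C$ from \eqref{150211b} and the conditions (b), (c) must be matched against the exact cost of a finite trade. A secondary technical point is the martingale property of $\sigma\int_0^T h_{s-}\,dW_s$ and $\int_0^T v'(h_{s-})\sigma h_{s-}(1-h_{s-})\,dW_s$ uniformly enough to interchange $\E$ and $\liminf$; boundedness of $h$ on $[0,1]$ and of $v,v'$ on $[0,1]$ makes this routine but it should be stated.
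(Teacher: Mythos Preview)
Your proposal is correct and is the standard verification argument. Note, however, that the paper does not actually prove Theorem~\ref{200911f}: it explicitly says the result ``can basically be obtained using well-known arguments'' and refers to \cite{L12} for details. The underlying calculation does surface later, in the proof of Theorem~\ref{280911h}, where one sees the identity
\[
J_{v_0,h_0}(L,M)=r+l+\liminf_{T\to\infty}\frac{1}{T}\,E\Biggl(\int_0^T\!\bigl(v'(h_s)(1+\gamma h_s)-\gamma\bigr)\frac{dL_s}{V_s}-\int_0^T\!\bigl(v'(h_s)(1-\gamma h_s)+\gamma\bigr)\frac{dM_s}{V_s}+\int_0^T\!\bigl(Dv(h_s)+f(h_s)-l\bigr)ds\Biggr),
\]
which is exactly what your combination of It\^o's formula for $\log V$ and for $v(h)$ produces. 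From there (a)--(c) give the upper bound and (i)--(iii) together with \eqref{200911a} give equality for the control limit policy, precisely as you outline.

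One correction to your algebra: the coefficient of $dL$ in $dh_t$ is $\tfrac{1+\gamma h}{V}$, not $\tfrac{1-h}{V}$ (and $-\tfrac{1-\gamma h}{V}$ for $dM$); this is what makes conditions (b) and (c) match up exactly with the singular terms, and it is also why the specific combinations $v'(h)(1\pm\gamma h)\mp\gamma$ appear in the display above and in Lemma~\ref{280911d}. Since the control limit policy is continuous and $h_0\in[A,B]$, there are no jump terms to handle in the equality part, so your worry about discrete trades only concerns the upper bound over arbitrary c\`adl\`ag policies; there the argument you sketch (mean value theorem plus (b), (c)) is the right one.
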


%

Now it remains to prove the existence and uniqueness of constants $A<B\in\,]0,1[$ such that the control limit policy for the limits $A,B$ is optimal. Their existence will be established in Proposition \ref{200911e} of the next section as the limit of the boundaries $(a_n,\alpha_n,\beta_n,b_n)$ of the constant boundary strategies for fixed costs $\delta_n\rightarrow 0$.
The following result is essentially used in the next section to establish the uniqueness of the optimal $A,B$.
\begin{lem}\label{280911d}
	Let $A<B\in\,]0,1[$ and $h_0\in [A,B]$. Suppose the control limit policy $(L,M)$ for the limits $A,B$ is optimal with the optimal value $\rho_{v_0,h_0}>r$. Then the corresponding value process $V$ satisfies
	\begin{equation*}
	\limsup\limits_{T\to\infty}\frac{1}{T} E\Biggl(
	\int\limits_0^T\frac{1}{V_s}dL_s\Biggr)>0,~\,
	\limsup\limits_{T\to\infty}\frac{1}{T} E\Biggl(
	\int\limits_0^T\frac{1}{V_s}d M_s\Biggr)>0.
	\end{equation*}
\end{lem}

\begin{proof}
	It holds that	
	\begin{equation*}
	\rho_{v_0,h_0}=
	J_{v_0,h_0}(L,M)=r+\liminf\limits_{T\to\infty}\frac{1}{T}E
	\Biggl(-\int\limits_0^T\frac{1+\gamma}{1-h_s}\frac{1}{V_s}dL_s
	+\int\limits_0^T\frac{1-\gamma}{1-h_s}\frac{1}{V_s}dM_s\Biggr)
	\end{equation*}
	and thus for $l:=\rho_{v_0,h_0}-r$
	\begin{align*}
	0<l&=\liminf\limits_{T\to\infty}\frac{1}{T}E
	\Biggl(-\int\limits_0^T\frac{1+\gamma}{1-h_s}\frac{1}{V_s}dL_s
	+\int\limits_0^T\frac{1-\gamma}{1-h_s}\frac{1}{V_s}dM_s\Biggr)\\
	&\leq
	\limsup\limits_{T\to\infty}\frac{1}{T}E
	\Biggl(\int\limits_0^T\frac{1-\gamma}{1-h_s}\frac{1}{V_s}dM_s\Biggr)
	=\frac{1-\gamma}{1-B}\cdot
	\limsup\limits_{T\to\infty}\frac{1}{T}E
	\Biggl(\int\limits_0^T\frac{1}{V_s}dM_s\Biggr).
	\end{align*}
	Now we assume
	\begin{equation}\label{270911j}
	\limsup\limits_{T\to\infty}\frac{1}{T} E\Biggl(
	\int\limits_0^T\frac{1}{V_s}dL_s\Biggr)=0
	\end{equation}
	and consider new limits $\tilde{A}<\tilde{B}\in\,]0,1[$ and a new starting value $\tilde{h_0}\in [\tilde{A},\tilde{B}]$ such that after the transformation from $]0,1[$ to $\R$ via $\psi(x):=\log\frac{x}{1-x}$ the distances remain the same, i.\,e.
	\begin{equation}\label{270911i}
	\psi(B)-\psi(A)=\psi(\tilde{B})-\psi(\tilde{A}),~
	\psi(B)-\psi(h_0)=\psi(\tilde{B})-\psi(\tilde{h_0})
	\end{equation}
	and $\tilde{B}>B$. Furthermore, we denote by $(\tilde{L},\tilde{M})$ the control limit policy for the new limits $\tilde{A},\tilde{B}$. 
	The processes $y:=\psi(h)$, where $h$ is the risky fraction processes corresponding to $(L,M)$, satisfies
	\begin{equation}\label{270911g}
	y_t=y_0+\Bigl(\mu-r-\frac{\sigma^2}{2}\Bigr)\cdot t
	+\sigma W_t+\frac{1+\gamma A}{A(1-A)}\int\limits_0^t
	\frac{1}{V_s}d L_s
	-\frac{1-\gamma B}{B(1-B)}\int\limits_0^t
	\frac{1}{V_s}d M_s,
	\end{equation}
	The corresponding formula for $\tilde{y}:=\psi(\tilde{h})$ associated to $(\tilde{L},\tilde{M})$ yields that both are instantaneous reflections of the same diffusion with different boundaries and different starting values. Since the processes involved in an instantaneous reflection are unique (cf.$~$\cite{GS72}, p.$~$185
	), \eqref{270911i} yields
	\begin{equation*}
	\frac{1+\gamma A}{A(1-A)}\int\limits_0^t
	\frac{1}{V_s}d L_s=\frac{1+\gamma \tilde{A}}{\tilde{A}(1-\tilde{A})}\int\limits_0^t
	\frac{1}{\tilde{V}_s}d \tilde{L}_s,~
	\frac{1-\gamma B}{B(1-B)}\int\limits_0^t
	\frac{1}{V_s}d M_s=\frac{1-\gamma \tilde{B}}{\tilde{B}(1-\tilde{B})}\int\limits_0^t
	\frac{1}{\tilde{V}_s}d \tilde{M}_s.
	\end{equation*} 	
	Thus \eqref{270911j} implies
	$$\limsup\limits_{T\to\infty}\frac{1}{T} E\Biggl(
	\int\limits_0^T\frac{1}{\tilde{V}_s}d\tilde{L}_s\Biggr)=0$$
	and we therefore get 
	the contradiction
	\begin{align*}
	J_{v_0,\tilde{h}_0}\bigl(\tilde{L},\tilde{M}\bigr)
	&=r+\liminf\limits_{T\to\infty}\frac{1}{T}E
	\Biggl(-\int\limits_0^T\frac{1+\gamma}
	{1-\tilde{h}_s}\frac{1}{\tilde{V}_s}d\tilde{L}_s
	+\int\limits_0^T\frac{1-\gamma}
	{1-\tilde{h}_s}\frac{1}{\tilde{V}_s}d\tilde{M}_s\Biggr)\\
	&=r+\liminf\limits_{T\to\infty}\frac{1}{T}E
	\Biggl(\frac{1-\gamma}{1-B}\int\limits_0^T\frac{1}{V_s}d M_s\Biggr)
	\frac{1-\gamma B}{B}\frac{\tilde{B}}{1-\gamma \tilde{B}}
	>r+l=\rho_{v_0,h_0}.
	\end{align*}
\end{proof} 


\section{Convergence in case of vanishing fixed costs}\label{nofix}
We now prove the convergence of the boundaries and that of the optimal expected growth rates for vanishing fixed transaction costs. Moreover, the convergence of the corresponding QVI-solutions is proved, yielding the uniqueness and optimality of the limiting constants $A,B$ in the model of the previous section.

\subsection{Convergence of boundaries and optimality}
In this subsection, we first establish the convergence 
$$\lim\limits_{\delta\to 0}a(\delta)=A=\lim\limits_{\delta\to 0}\alpha(\delta)\mbox{ and }
\lim\limits_{\delta\to 0}\beta(\delta)=B=\lim\limits_{\delta\to 0}b(\delta)$$
to some constants $A,B\in\,]0,1[$. This is achieved on $\R$ via the coordinate transformation $\psi$. Then we prove the convergence of the corresponding optimal growth rates and that of the QVI-solutions $u(\delta)$ from Theorem \ref{200911h} to some solution $u$ to the HJB-equation from Theorem \ref{200911f} with the above $A,B$. Using that, we finally obtain uniqueness and optimality of $A,B$.
\begin{prop}\label{240611a}
	Let $(\delta_n)_{n\in\N}\in\,]0,1[^\N$ be a sequence of trading proportions defining the fixed costs in the sense of \eqref{cost} for a fixed common $\gamma>0$ defining the proportional costs and satisfying $\gamma<1-\sup\limits_{n\in\N}\delta_n$. Let the Merton fraction $\hat{h}:=\frac{\mu-r}{\sigma^2}$ satisfy $\hat{h}\in\,]0,1[$ and suppose $\lim\limits_{n\to\infty}\delta_n=0$. Then the corresponding constants $a_n,\alpha_n,\beta_n$ and $b_n$ given by Theorem \ref{210611a} for every $n\in\N$ satisfy
	\begin{equation*}
	\lim\limits_{n\to\infty}
	|b_n-\beta_n|=0\mbox{ and }\lim\limits_{n\to\infty}|a_n-\alpha_n|=0.
	\end{equation*}
\end{prop}

\begin{proof}
	Since $\lim\limits_{n\to\infty}|a_n-\alpha_n|=0$ can be shown the same way, we restrict ourselves to proving only $\lim\limits_{n\to\infty}|b_n-\beta_n|=0$ here. For that purpose we define for every $n\in\N$ the function
	\begin{equation*}
	g_n:I_n:=[\beta_n,b_n]\rightarrow \R,~x\mapsto
	u_n(x)-u_n(\beta_n)-\Gamma_n(x,\beta_n),
	\end{equation*}
	where $u_n$ is the solution from Theorem \ref{210611a} and $\Gamma_n$ is the cost function with the corresponding constants $\gamma,\delta_n$ and the transformation function $\varphi$.
	
	The proof is based on the following idea: The assumption $\inf\limits_{n\in\N}|\beta_n-b_n|>0$ leads to $\lim\limits_{n\to\infty}g_n=\lim\limits_{n\to\infty}g_n'=\lim\limits_{n\to\infty}g_n''=0$ uniformly on a common interval $I\subseteq \,]\beta_n,b_n[$, $n\in\N$. But since we have $Lu_n=-\bigl(\bar{f}-l_n\bigr)$ on $I$ for every $n\in\N$, the limit cost function $v(x)=-\Gamma_0(x,y)$ with $\delta=0$ has to satisfy $Lv=-\bigl(\bar{f}-l_0\bigr)$ on $I$ for some fixed $y\in\R$, which is a contradiction.\leer
	
	a) Let $n\in\N$. What we show here, is that for all $x\in [\beta_n,b_n]$
	\begin{equation}\label{280611a}
	0\leq g_n(x)\leq \left|\log\left(\frac{1-\delta_n-\gamma}{1-\gamma}\right)\right|.
	\end{equation}
	Since for every $x\in\,]\beta_n,b_n]$ the inequality $$u_n(\beta_n)+\Gamma_n(x,\beta_n)<u_n(x)+\Gamma_n(x,x)$$ would imply
	$$u_n(b_n)\stackrel{\ref{210611a} (x)}{=}
	u_n(\beta_n)+\Gamma_n(b_n,\beta_n)\stackrel{\eqref{200611f}}{<}
	u_n(x)+\Gamma_n(b_n,x)\stackrel{ }{\leq} u_n(b_n),$$ 
	hence a contradiction, where we used
	\begin{equation}\label{200611f}
	\frac{\partial}{\partial x}\Gamma(x,y)
	=\frac{-\gamma\varphi^{'}(x)}{1-\delta-\gamma\varphi(x)}<0\mbox{ and }
	\frac{\partial}{\partial y}\Gamma(x,y)
	=\frac{\gamma\varphi^{'}(y)}{1-\gamma\varphi(y)}>0.
	\end{equation}
	Therefore,
	$$u_n(x)\stackrel{\ref{210611a}(viii)}{\geq} u_n(\beta_n)+\Gamma_n(x,\beta_n)\geq u_n(x)+\Gamma_n(x,x)
	=u_n(x)+\log\left(\frac{1-\delta_n-\gamma\varphi(x)}
	{1-\gamma\varphi(x)}\right).$$
	But since $\Gamma_n(x,x)$ is strictly decreasing in $x$ on $\R$, \eqref{280611a} follows.
	
	b) Now let $\varepsilon>0$ and suppose $|\beta_n-b_n|\geq 4\varepsilon$
	for infinitely many $n\in\N$ and without loss of generality for all $n\in\N$.
	
	b1) We first show that there is a constant $c=c(\mu,\sigma^2,r,\gamma)$ such that
	\begin{equation}\label{300611a}
	\sup\limits_{n\in\N}\sup\limits_{x\in [\beta_n,\beta_n+2\varepsilon]}|g_n''(x)|\leq c.
	\end{equation}
	By \eqref{200611f} we have for $n\in\N$ and $x\in\,]\beta_n,b_n[$
	\begin{equation}\label{300611b}
	g_n'(x)=u_n'(x)+\frac{\gamma\varphi'(x)}{1-\delta_n-\gamma\varphi(x)}
	\end{equation}
	and hence
	\begin{equation}\label{300611c}
	g_n''(x)=u_n''(x)+\frac{\gamma\varphi''(x)}
	{1-\delta_n-\gamma\varphi(x)}
	+\left(\frac{\gamma\varphi'(x)}{1-\delta_n-\gamma\varphi(x)}\right)^2,
	\end{equation}
	where by
	taking derivatives
	\begin{equation}\label{300611d}
	\varphi'(x)=\varphi(x)-\varphi^2(x)\mbox{ and }
	\varphi''(x)=\varphi'(x)(1-2\varphi(x))=\varphi(x)-
	3\varphi^2(x)+2\varphi^3(x).
	\end{equation}
	Since $\varphi(x)\in\,]0,1[$ for all $x\in\R$ and $\gamma\in\,\bigl]0,1-\sup\limits_{m\in\N}\delta_m\bigr[$, it remains by \eqref{300611c} and \eqref{300611d} to prove the boundedness of $u_n''$.
	Since $\mu-r-\frac{1}{2}\sigma^2=0$
	directly implies this, we only consider the case $\mu-r-\frac{1}{2}\sigma^2\neq0$ here.
	It was enough to see the boundedness of $u_n'$ in $n\in\N$ on $[\beta_n,\beta_n+2\varepsilon]$. 
	By Theorem \ref{210611a} $~u_n$ is strictly decreasing on $[\beta_n,b_n]$ and we can find an $x_n<\beta_n$, where $u_n$ attains its maximum on $[a_n,b_n]$.
	In view of \eqref{280611a}, \eqref{300611b} and the mean value theorem applied to $g_n$, we further get for some $\xi\in\,]\beta_n+2\varepsilon,\beta_n+3\varepsilon[$
	\begin{equation}\label{300611e}
	|u_n'(\xi)|\leq \frac{1}{\varepsilon}\left|\log\left(\frac{1-\delta_n-\gamma}{1-\gamma}\right)\right|+
	\frac{\gamma\varphi'(\xi)}{1-\delta_n-\gamma\varphi(\xi)}.
	\end{equation}
	Now in the case, where $u_n'$ attains its minimum on $[x_n,\xi]$ in some $x\in\,]x_n,\xi[$, we have $u_n''(x)=0$ and hence 
	it is easily seen that
	\begin{equation*}\label{300611f}
	\sup\limits_{y\in [\beta_n,\beta_n+2\varepsilon]}|u_n'(y)|
	\leq|u_n'(x)|\leq\left|\frac{-\bar{f}(x)+l_n}
	{\mu-r-\frac{1}{2}\sigma^2}\right|.
	\end{equation*}
	If on the contrary the minimum is attained in $\xi$, we have by \eqref{300611e}
	\begin{equation*}
	\sup\limits_{y\in [\beta_n,\beta_n+2\varepsilon]}|u_n'(y)|
	\leq \frac{1}{\varepsilon}\left|\log\left(\frac{1-\delta_n-\gamma}{1-\gamma}\right)\right|+
	\frac{\gamma\varphi'(\xi)}{1-\delta_n-\gamma\varphi(\xi)}.
	\end{equation*}
	
	b2) Here, we show that there exist $\beta<b\in\R$ such that $I:=[\beta,b]\subseteq \,]\beta_n,b_n[$ for infinitely many $n\in\N$.
	
	Theorem \ref{210611a} yields for the constant $\bar{l}_1:=\inf\limits_{n\in\N}\,l_n$ the boundedness of the set\\ $\bigl\{x\in\R:\bar{f}(x)\geq \bar{l}_1\bigr\}$ in $\R$ and for every $n\in\N$ also
	$x_n\in\bigl\{x\in\R:\bar{f}(x)\geq \bar{l}_1\bigr\}=:[y_0,y_1].$
	Since for all $x,y\in\R$ we have $|u_n(x)-u_n(y)|
	\stackrel{\ref{210611a}(vi)}{\leq}\sup\limits_{n\in\N}
	\|\Gamma_n\|_\infty<\infty$, we can use Theorem \ref{210611a} (v)
	to get
	$y_2:=\sup\limits_{n\in\N}|y_1-b_n|<\infty$
	and thus $[\beta_n,b_n]\subseteq [y_0,y_1+y_2]$ for all $n\in\N$. Hence
	we can find convergent subsequences $\bigl(\beta_{n_k}\bigr)_{k\in\N},\bigl(b_{n_k}\bigr)_{k\in\N}$, which satisfy $|\beta_{n_k}-b_{n_k}|\geq 4\varepsilon$ from our previous assumption.
	
	b3) Now we assume by b2) without loss of generality $I\subseteq \,]\beta_n,b_n[$ for all $n\in\N$.\\
	We then can use a) and b1) together with a Landau-type inequality (see \cite{L14}) to get
	\begin{equation}\label{010711a}
	\lim\limits_{n\to\infty}\left\|g_n'\right\|_{I,\infty}=0.
	\end{equation}
	We define $l_0:=\sup\limits_{n\in\N}l_n$ and get from \eqref{010711a} and \eqref{300611b}
	\begin{equation}\label{010711d}
	\lim\limits_{n\to\infty} u_n''(x)=\frac{2}{\sigma^2}\bigl(-\bar{f}(x)+l_0\bigr)
	+\frac{2}{\sigma^2}\Bigl(\mu-r-\frac{1}{2}\sigma^2\Bigr)
	\frac{\gamma\varphi'(x)}{1-\gamma\varphi(x)},
	\end{equation}
	uniformly in $x\in I$, and by \eqref{300611c} we therefore have uniformly in $x\in I$
	\begin{equation*}
	\lim\limits_{n\to\infty} g_n''(x)=
	\frac{2}{\sigma^2}\bigl(-\bar{f}(x)+l_0\bigr)
	+\frac{2}{\sigma^2}\Bigl(\mu-r-\frac{1}{2}\sigma^2\Bigr)
	\frac{\gamma\varphi'(x)}{1-\gamma\varphi(x)}
	+\frac{\gamma\varphi''(x)}{1-\gamma\varphi(x)}
	+\left(\frac{\gamma\varphi'(x)}{1-\gamma\varphi(x)}\right)^2.
	\end{equation*}
	This implies together with \eqref{010711a} for all $x\in I$
	\begin{equation}\label{010711b}
	0=\frac{2}{\sigma^2}\bigl(-\bar{f}(x)+l_0\bigr)
	+\frac{2}{\sigma^2}\Bigl(\mu-r-\frac{1}{2}\sigma^2\Bigr)
	\frac{\gamma\varphi'(x)}{1-\gamma\varphi(x)}
	+\frac{\gamma\varphi''(x)}{1-\gamma\varphi(x)}
	+\left(\frac{\gamma\varphi'(x)}{1-\gamma\varphi(x)}\right)^2.
	\end{equation}
	
	b4) Here, we show that \eqref{010711b} is not possible.\\
	We replace in \eqref{010711b} the functions $\varphi'$ and $\varphi''$ by \eqref{300611d} and the function $\bar{f}$ by its definition and convert the fraction, yielding
	\begin{equation}\label{010711c}
	\frac{\sum\limits_{i=0}^4\lambda_i\varphi^i(x)}{\bigl(1-\gamma\varphi(x)\bigr)^2}=0
	\end{equation}
	for some $\lambda_i\in\R$, $i\in\{0,1,2,3,4\}$, where $\lambda_0=l_0$. Now since the denominator is positive, we consider only the numerator and convert \eqref{010711c} by using the definition of $\varphi$ for all $x\in I$ to the form
	$\sum_{i=0}^4\mu_i (e^x)^i=0$
	for some $\mu_i\in\R$, $i\in\{0,1,2,3,4\}$, where $\mu_0=\lambda_0=l_0$.
	Therefore we get $\sum_{i=0}^4\mu_i x^i=0$ for all $x\in [e^\beta,e^b]$.
	Finally, we deduce by taking derivatives $0=\mu_0=l_0\geq \inf\limits_{n\in\N}l_n
	>f(0)=0$ and hence a contradiction.
\end{proof}

Using the result above we can prove the convergence of the boundaries at least for some subsequence.

\begin{prop}\label{200911b}
	Let $(\delta_n)_{n\in\N}\in\,]0,1[^\N$ be a sequence of trading proportions defining the fixed costs in the sense of \eqref{cost} for a fixed common $\gamma>0$ defining the proportional costs and satisfying $\gamma<1-\sup\limits_{n\in\N}\delta_n$. Let the Merton fraction $\hat{h}:=\frac{\mu-r}{\sigma^2}$ satisfy $\hat{h}\in\,]0,1[$ and suppose $\lim\limits_{n\to\infty}\delta_n=0$. Then there exist constants $a_0<x_0<b_0\in\R$ and a subsequence $\bigl(\delta_{n_k}\bigr)_{k\in\N}$ such that the corresponding constants $x_{n_k},a_{n_k},\alpha_{n_k},\beta_{n_k}$ and $b_{n_k}$ given by Theorem \ref{210611a} for every $k\in\N$ satisfy
	\begin{equation*}
	\lim\limits_{k\to\infty}x_{n_k}=x_0,~
	\lim\limits_{k\to\infty}a_{n_k}=a_0
	=\lim\limits_{k\to\infty}\alpha_{n_k},~
	\lim\limits_{k\to\infty}\beta_{n_k}=b_0
	=\lim\limits_{k\to\infty}b_{n_k}.
	\end{equation*}
\end{prop}

\begin{proof}
	Theorem \ref{210611a} yields a function $u_n$ and a constant $l_n$ such that
	$$x_n\in\bigl\{y\in\R:\bar{f}(y)-l_n\geq 0\bigr\}\cap [\alpha_n,\beta_n]$$
	and $\sup\limits_{n\in\N}\|u_n\|_\infty\leq\sup\limits_{n\in\N}\|\Gamma_n\|_\infty<\infty.$
	Therefore we can use Theorem \ref{210611a} (v) 
	to get
	\begin{equation}\label{200911d}
	\sup\limits_{n\in\N}b_n<\infty,~\inf\limits_{n\in\N}a_n>-\infty.
	\end{equation}
	Now as in the proof of Proposition \ref{240611a} we have
	\begin{equation*}
	\sup\limits_{n\in\N,x\in[\alpha_n,\beta_n]}|u_n'(x)|<\infty
	\end{equation*}
	and due to Theorem \ref{210611a} (ix)
	for all $x\in \,]a_n,b_n[$
	\begin{equation*}
	u_n''(x)=\frac{2}{\sigma^2}\bigl(-\bar{f}(x)+l_n\bigr)
	-\frac{2}{\sigma^2}\Bigl(\mu-r-\frac{1}{2}\sigma^2\Bigr)u_n'(x)
	\end{equation*}
	and therefore
	\begin{equation}\label{200911c}
	\sup\limits_{n\in\N,x\in[\alpha_n,\beta_n]}|u_n''(x)|<\infty.
	\end{equation}
	Furthermore, since $u_n'(x_n)=0$ and $u_n'(\beta_n)=\frac{-\gamma}{1-\gamma\beta_n}$ by Theorem \ref{210611a}, the mean value theorem together with \eqref{200911c} and \eqref{200911d} yields
	\begin{equation}\label{221011c}
	\inf\limits_{n\in\N}|x_n-\beta_n|>0\mbox{ and analogously }\inf\limits_{n\in\N}|x_n-\alpha_n|>0.
	\end{equation}
	Finally, Proposition \ref{240611a} implies
	\begin{equation*}
	\lim\limits_{n\to\infty}
	|b_n-\beta_n|=0\mbox{ and }\lim\limits_{n\to\infty}|a_n-\alpha_n|=0
	\end{equation*}
	and therefore the existence of the desired constants $a_0<x_0<b_0\in\R$ and of the subsequence $\bigl(\delta_{n_k}\bigr)_{k\in\N}$ follows.
	
\end{proof}

\begin{bem}
	The convergence $\lim\limits_{n\to\infty}|b_n-\beta_n|=0$ and $\lim\limits_{n\to\infty}|a_n-\alpha_n|=0$ shown above is a convergence on $\R$ for the transformed boundaries and is therefore a priori stronger than a convergence on $]0,1[$ before the transformation. It is equivalent if these sequences stay away from the boundary, since the transformation function $\psi$ is Lipschitz continuous on every compact subset of $]0,1[$, which is an easy consequence of $\psi\in C^\infty(]0,1[,\R)$, and its inverse $\varphi$ is also Lipschitz continuous.
	That these sequences in fact stay away from the boundary, can be seen in \eqref{200911d}.
\end{bem}

The next proposition is crucial. It guarantees the existence of a solution $u$ to the HJB-equation described in Theorem \ref{200911f}, which is the limit of the solutions $u_n$ for vanishing fixed costs $\delta_n$. It will also be used for the uniqueness and optimality of the constants $a_0,b_0$ of Proposition \ref{200911b}.


\begin{prop}\label{200911e}
	Let $(\delta_n)_{n\in\N}\in\,]0,1[^\N$ be a sequence of trading proportions defining the fixed costs in the sense of \eqref{cost} for a fixed common $\gamma>0$ defining the proportional costs and satisfying $\gamma<1-\sup\limits_{n\in\N}\delta_n$. Let the Merton fraction $\hat{h}:=\frac{\mu-r}{\sigma^2}$ satisfy $\hat{h}\in\,]0,1[$ and suppose $\lim\limits_{n\to\infty}\delta_n=0$. Let further $a_0<b_0\in\R$ and $\bigl(\delta_{n_k}\bigr)_{k\in\N}$ be from Proposition \ref{200911b} and $u_{n_k},\rho_{n_k}$ be the corresponding functions and optimal growth rates from Theorem \ref{200911h}.
	Then there exists a function $u\in C^2([0,1],\R)$ given by $u(x)=\lim\limits_{k\to\infty}u_{n_k}(x)$ for all $x\in[0,1]$ that satisfies the conditions of Theorem \ref{200911f} for the constants $A:=\varphi(a_0)<B:=\varphi(b_0)$ and $l=\lim\limits_{k\to\infty}\rho_{n_k}-r$.
\end{prop}

\begin{proof}
	Let $l_n,a_n,\alpha_n,x_{0,n},\beta_n$ and $b_n$ denote the constants and $u_n,\Gamma_n$ denote the functions given by Theorem \ref{200911h} for every $n\in\N$ and $\delta_n$. Without loss of generality we have
	\begin{equation}\label{210911b}
	\lim\limits_{n\to\infty}x_{0,n}=x_0,~
	\lim\limits_{n\to\infty}a_{n}=A
	=\lim\limits_{n\to\infty}\alpha_{n},~
	\lim\limits_{n\to\infty}\beta_{n}=B
	=\lim\limits_{n\to\infty}b_{n}
	\end{equation}
	for some $0<A<x_0<B<1$ by Proposition \ref{200911b}. Since for decreasing $\delta_n$ the corresponding optimal growth rates $\rho_n=r+l_n$ are increasing, we define $$l_0:=\lim\limits_{n\to\infty}l_n=\sup\limits_{n\to\infty}l_n.$$
	a) Now we define the function $u$ on $[0,1]$ by
	\begin{equation}\label{270911k}
	u(x):=
	\begin{cases}
	\Gamma_0(x,A), &x\leq A,\\
	u(A)+\int\limits_A^xg(y,x_0,l_0)dy, &A<x\leq B,\\
	u(B)+\Gamma_0(x,B), &x>B,
	\end{cases}
	\end{equation}
	where $\Gamma_0$ is the cost function
	for $\delta=0$ and $g$ is as in \eqref{210911a}. By the definition of $g$ and \eqref{210911b} we have
	$$\lim\limits_{n\to\infty}g(x,x_{0,n},l_n)=g(x,x_0,l_0)\mbox{ uniformly on }[\varepsilon,1-\varepsilon]$$
	for some small $\varepsilon>0$ with $[A,B]\subseteq \,]\varepsilon,1-\varepsilon[$. Using the definition of $\Gamma_n,\Gamma_0$ and \eqref{210911b} this implies
	\begin{equation}\label{210911d}
	\lim\limits_{n\to\infty}u_n(x)=u(x)\mbox{ uniformly on }[0,1].
	\end{equation}
	b) We now have to consider the first derivatives. We have by Theorem \ref{200911h}
	\begin{equation*}
	u_n'(x)=
	\begin{cases}
	\frac{\gamma}{1-\delta_n+\gamma x}, &x\leq a_n,\\
	g(x,x_{0,n},l_n), &a_n\leq x\leq b_n,\\
	-\frac{\gamma}{1-\delta_n-\gamma x}, &x\geq b_n,
	\end{cases}\mbox{  and define }\tilde{u}(x):=
	\begin{cases}
	\frac{\gamma}{1+\gamma x}, &x\leq A,\\
	g(x,x_{0},l_0), &A< x\leq B,\\
	-\frac{\gamma}{1-\gamma x}, &x> B.
	\end{cases}
	\end{equation*}
	By \eqref{210911b} and the continuity of $u_n'$ we get
	\begin{equation*}
	\frac{\gamma}{1+\gamma A}=\lim\limits_{n\to\infty}
	u_n'(a_n)=g(A,x_0,l_0)\mbox{ and }	-\frac{\gamma}{1-\gamma B}=\lim\limits_{n\to\infty}
	u_n'(b_n)=g(B,x_0,l_0). 
	\end{equation*}
	Hence $\tilde{u}$ is continuous on $[0,1]$ and it follows
	\begin{equation}\label{210911c}
	\lim\limits_{n\to\infty}u_n'(x)=\tilde{u}(x)\mbox{ uniformly on }[0,1].
	\end{equation}
	\eqref{210911d} together with \eqref{210911c} allows us to conclude $\tilde{u}=u'$.
	
	c) Here, we want to use the notation $g'(x,y,l)$ instead of $\frac{\partial}{\partial x}g(x,y,l)$. Since $u'=\tilde{u}$ is clearly differentiable in $[0,1]\setminus \{A,B\}$, we need to show the differentiability of $u'$ in $A,B$ and calculate $u''$. For the second derivatives we have
	\begin{equation*}
	u_n''(x)=
	\begin{cases}
	-\frac{\gamma^2}{(1-\delta_n+\gamma x)^2}, &x< a_n,\\
	g'(x,x_{0,n},l_n), &a_n< x< b_n,\\
	-\frac{\gamma^2}{(1-\delta_n-\gamma x)^2}, &x> b_n,
	\end{cases}
	\end{equation*}
	where we have discontinuities in $a_n,b_n$ due to (vii) and (viii) of Theorem \ref{200911h}. But at the same time (vii) and (viii) of Theorem \ref{200911h} imply
	$$-\frac{\gamma^2}{(1-\gamma B)^2}\leq
	\lim\limits_{n\to\infty}g'(b_n,x_{0,n},l_n)=g'(B,x_0,l_0)=
	\lim\limits_{n\to\infty}g'(\beta_n,x_{0,n},l_n)\leq
	-\frac{\gamma^2}{(1-\gamma B)^2}
	$$
	and
	$$-\frac{\gamma^2}{(1+\gamma A)^2}\leq
	\lim\limits_{n\to\infty}g'(a_n,x_{0,n},l_n)=g'(A,x_0,l_0)=
	\lim\limits_{n\to\infty}g'(\alpha_n,x_{0,n},l_n)\leq
	-\frac{\gamma^2}{(1+\gamma A)^2}.
	$$
	Therefore the function
	\begin{equation*}
	\tilde{\tilde{u}}(x):=
	\begin{cases}
	-\frac{\gamma^2}{(1+\gamma x)^2}, &x\leq A,\\
	g'(x,x_0,l_0), &A< x\leq B,\\
	-\frac{\gamma^2}{(1-\gamma x)^2}, &x> b_n,
	\end{cases}
	\end{equation*}
	is continuous. Furthermore, it holds $\tilde{\tilde{u}}(x)=u''(x)$ for every $x\in[0,1]\setminus \{A,B\}$. But since $u'$ has left-hand and right-hand derivatives in $A,B$, which are equal due to the continuity of $\tilde{\tilde{u}}$, $u'$ is differentiable on $[0,1]$ with derivative $\tilde{\tilde{u}}$ and therefore $u\in C^2([0,1],\R)$.
	
	For every $x\in[0,1]\setminus \{A,B\}$ almost all $n\in\N$ satisfy $x\notin\{a_n,b_n\}$ and we therefore have
	\begin{equation}\label{210911e}
	\lim\limits_{n\to\infty} u_n''(x)=u''(x).
	\end{equation}
	
	d) Now (iii) and (iv) of Theorem \ref{200911h} together with \eqref{210911c} and \eqref{210911e} imply $$Du(x)+f(x)-l\leq 0\mbox{ for all }x\in[0,1]$$ and $Du(x)+f(x)-l=0$ for all $x\in\,]A,B[$. $$Du(x)+f(x)-l=0\mbox{ for all }x\in [A,B]$$ then follows from the continuity of $u'$ and $u''$. It remains to show for all $x\in [0,1]$
	$$u'(x)\leq \frac{\gamma}{1+x\gamma},~u'(x)\geq -\frac{\gamma}{1-x\gamma}.$$
	We only show the latter, due to the analogous proof.
	
	e) Here, we show $u'(y)\geq -\frac{\gamma}{1-\gamma y}$ for all $y\in [0,1]$. By Theorem \ref{200911h} we have
	\begin{equation}\label{260911a}
	u(y)-u(x)+\Gamma_0(x,y)\leq 0 \mbox{ for all }x,y\in[0,1].
	\end{equation}
	Furthermore, we have $\frac{\partial}{\partial y}\Gamma_0(x,y)=\frac{\gamma}{1-\gamma y}$ for all $x> y\in\,]0,1[$. The assumption $$u'(y_0)<-\frac{\gamma}{1-\gamma y_0}\mbox{ for some }y_0\in\,]0,1[$$ implies $$u'(y)<\frac{-\gamma}{1-\gamma y}\mbox{ for all }y\in \,]y_0-\varepsilon,y_0+\varepsilon[\mbox{ for some suitable small }\varepsilon>0$$ and therefore
	\begin{equation}\label{260911b}
	u'(y)+\frac{\partial}{\partial y}\Gamma_0(y_0,y)<0\mbox{ for all }
	y\in \,]y_0-\varepsilon,y_0[.
	\end{equation}
	Now the function $h:[0,1]\rightarrow \R,y\mapsto u(y)-u(y_0)+\Gamma_0(y_0,y)$ satisfies
	\begin{equation}\label{260911c}
	h(y_0)=0\mbox{ and }h(y)\leq 0\mbox{ for all }y\in[0,1]
	\end{equation}
	due to \eqref{260911a}. But \eqref{260911b} implies that $h$ is strictly decreasing on $]y_0-\varepsilon,y_0]$, which contradicts \eqref{260911c}.
	
\end{proof}

Now we are able to prove a result, which was already announced in the last section. To the best of our knowledge there is no uniqueness result for the constants $A,B$ in the model from Section \ref{280911i}, not even for the similar models in \cite{TKA88} and \cite{AST01}.

\begin{satz}\label{280911h}
	Let the Merton fraction $\hat{h}:=\frac{\mu-r}{\sigma^2}$ satisfy $\hat{h}\in\,]0,1[$. Then there exist unique $A<B\in\,]0,1[$ such that the control limit policy for the limits $A,B$ is optimal.
\end{satz}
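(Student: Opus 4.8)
The plan is to obtain both assertions from Proposition~\ref{200911e}, the verification Theorem~\ref{200911f}, and Lemma~\ref{280911d}. For existence, fix any sequence $\delta_n\downarrow 0$ with $\gamma<1-\sup_n\delta_n$, pass to the subsequence of Proposition~\ref{200911b}, and apply Proposition~\ref{200911e}: it yields $u\in C^2([0,1],\R)$ and $l\in\R$ satisfying all hypotheses of Theorem~\ref{200911f} with $A:=\varphi(a_0)$, $B:=\varphi(b_0)$, so the control limit policy for $(A,B)$ is optimal. Since $l=\sup_n l_n$ and $l_n>0$ for every $n$ (Theorem~\ref{210611a}(iv)), the common optimal growth rate is $\rho=r+l>r$; this strict inequality is what makes Lemma~\ref{280911d} applicable in the uniqueness argument.

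For uniqueness, suppose the control limit policy $(L',M')$ for a second pair $A'<B'\in\,]0,1[$ is also optimal; started from $h_0\in[A,B]\cap[A',B']$ it has value $\rho=r+l$. I would run, against the \emph{fixed} function $u$ from the existence step, the It\^o computation underlying Theorem~\ref{200911f} (equivalently Lemma~\ref{280911d}): applying It\^o's formula to $\log V^{(L',M')}_t+u(h'_t)$ and using that $h'$ has generator $D$ away from its reflecting boundaries $A',B'$, one finds that $J(L',M')$ equals $r+l$ plus $\liminf_{T\to\infty}$ of
\[
\tfrac1T E\!\int_0^T(Du+f-l)(h'_s)\,ds\;+\;c_{A'}\,\tfrac1T E\!\int_0^T\!\tfrac{1}{V'_s}\,dL'_s\;+\;c_{B'}\,\tfrac1T E\!\int_0^T\!\tfrac{1}{V'_s}\,dM'_s,
\]
where $c_{A'}=u'(A')(1+\gamma A')-\gamma\le0$ by Theorem~\ref{200911f}(b), $c_{B'}=-u'(B')(1-\gamma B')-\gamma\le0$ by (c), and $Du+f-l\le0$ by (a); the bounded terms $\tfrac1T(u(h_0)-Eu(h'_T))$ drop out since $u$ is bounded. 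As $J(L',M')=\rho=r+l$, each of the three nonpositive summands above must tend to $0$. The first term, combined with positive recurrence of the reflected diffusion on $[A',B']$ (whose occupation-time measure is equivalent to Lebesgue measure there), forces $Du+f-l\equiv0$ on $[A',B']$; the last two, combined with Lemma~\ref{280911d} (which gives $\limsup_T\tfrac1T E\int_0^T\tfrac1{V'_s}dL'_s>0$ and its analogue for $M'$, because $\rho>r$), force $c_{A'}=c_{B'}=0$, i.e.\ $u'(A')=\frac{\gamma}{1+\gamma A'}$ and $u'(B')=-\frac{\gamma}{1-\gamma B'}$.

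It remains to show these rigidity conditions force $A'=A$ and $B'=B$. Since $-\frac{\gamma}{1-\gamma x}\le u'(x)\le\frac{\gamma}{1+\gamma x}$ on $[0,1]$ with equality on $[0,A]$, resp.\ $[B,1]$, and the two bounds have opposite signs, the identities just obtained give $A'\le B$ and $B'\ge A$, so $[A,B]$ and $[A',B']$ overlap. If $A'<A$ (resp.\ $B'>B$), then on the nondegenerate interval $[A',A]\subseteq[0,A]$ (resp.\ $[B,B']$) one has both $Du+f-l=0$ (this interval lies in $[A',B']$) and $u'=\frac{\gamma}{1+\gamma x}$ (resp.\ $u'=-\frac{\gamma}{1-\gamma x}$); inserting the explicit $u'$ into $Du+f-l=0$ and clearing $(1\pm\gamma x)^2$ gives a polynomial vanishing on an interval, hence identically, whose constant term is $l$ --- contradicting $l>0$. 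This is the device of step b4) in the proof of Proposition~\ref{240611a}. So $[A',B']\subseteq[A,B]$. For the reverse inclusion, put $p:=u'-\frac{\gamma}{1+\gamma x}\in C^1([0,1])$; then $p\le0$ and $p(A)=p(A')=0$, so $A$ and $A'$ are interior maximizers of $p$, whence $p'(A)=p'(A')=0$ and $p$ solves the \emph{first-order} linear equation $x(1-x)(\mu-r-\sigma^2x)p+\tfrac12\sigma^2x^2(1-x)^2p'=q$ on $[A,B]$ with $p(A)=0$, where $q$ is the explicit rational function obtained by substituting $\frac{\gamma}{1+\gamma x}$ into $-(Du+f-l)$. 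The variation-of-constants formula then forces $p\equiv0$ on $[A,A']$ unless $q$ changes sign on $(A,A')$; the first case is again the polynomial contradiction, and the sign behaviour of $q$ near $A$ together with $p\le0$ excludes the second, so $A'=A$, and symmetrically $B'=B$.

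The clean part is extracting the three rigidity conditions from optimality of $(L',M')$ via the verification mechanism and Lemma~\ref{280911d}. The main obstacle is the final identification of the boundaries: a ``spread'' no-trade region is ruled out immediately by the polynomial argument, but excluding a ``shifted'' one rests on the ODE/sign analysis of $q$ near the boundary; one must also be careful to match the normalisations of the local-time constants $c_{A'},c_{B'}$ and of the diffusion coefficients in \eqref{270911g} with those used in Lemma~\ref{280911d}, and to supply the ergodic statement for the reflected diffusion that converts the vanishing first term into $Du+f-l\equiv0$ on $[A',B']$.
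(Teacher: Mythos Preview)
Your existence argument coincides with the paper's. For uniqueness you take a genuinely different route.

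The paper never invokes ergodicity of the reflected diffusion to obtain $Du+f-l\equiv0$ on the competing interval. Instead it proves directly that the function $u$ from Proposition~\ref{200911e} satisfies the \emph{strict} inequality $u'(x)<\frac{\gamma}{1+\gamma x}$ for every $x\in\,]A,1[$. The mechanism is exactly the degree-two observation you also reach: the expression $v(x):=\frac{1}{b(x)}\bigl(-f(x)+l-a(x)\tfrac{\gamma}{1+\gamma x}\bigr)+\tfrac{\gamma^2}{(1+\gamma x)^2}$ (your $q/b$) has, after clearing denominators, a quadratic numerator, and every zero of $p:=u'-\tfrac{\gamma}{1+\gamma x}$ is a zero of $v$ because it is an interior maximum of $p$. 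With strictness in hand, the It\^o identity yields
\[
J(\tilde L,\tilde M)\;\le\;r+l+\bigl(u'(\tilde A)(1+\gamma\tilde A)-\gamma\bigr)\limsup_{T}\tfrac1TE\!\int_0^T\!\tfrac{1}{\tilde V_s}\,d\tilde L_s,
\]
and Lemma~\ref{280911d} turns $\tilde A>A$ into a contradiction in one line. For the reverse inequality $\tilde A\ge A$ the paper does \emph{not} use your ergodic rigidity; instead it replaces $u$ by the extended function $\tilde u(x)=u(A)+\int_A^x g(y,x_0,l)\,dy$ on $]0,B]$, which still solves $D\tilde u+f-l=0$ there and, as a limit of the $g_n$ from Theorem~\ref{200911h}(ix), again obeys the strict bound $\tilde u'(x)<\tfrac{\gamma}{1+\gamma x}$ for all $x\neq A$. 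The same It\^o computation, now with $\tilde u$, rules out $\tilde A<A$.

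Your route---extract the three rigidity conditions and then argue they pin down $A,B$---is sound, but the step you flag as ``the main obstacle'' can be finished more crisply than you sketch. Once you know the numerator of $q$ is quadratic and that $q(A)=q(A')=0$ (both forced by $p(A)=p(A')=0$, $p'(A)=p'(A')=0$ and the first-order ODE), the variation-of-constants identity $0=\mu(A')p(A')=\int_A^{A'}\mu q/b$ forces $q\equiv0$ on $(A,A')$ because $q$ cannot change sign between its only two zeros; then $q(0)=l>0$ is the contradiction. No separate ``sign behaviour of $q$ near $A$'' is needed. What your approach buys is that it never leaves the original $u$ (no auxiliary $\tilde u$), at the price of the ergodic statement you flag; what the paper's approach buys is that it avoids ergodicity entirely, at the price of introducing $\tilde u$ to access the region $x<A$.
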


\begin{proof}
	Proposition \ref{200911e} guarantees the existence of some $A<B\in\,]0,1[$ and a function $u$ defined by \eqref{270911k} that satisfies the conditions of Theorem \ref{200911f} for the constants $A,B$. Hence the control limit policy $(L,M)$ for the limits $A,B$ is optimal.
	
	Now we take arbitrary $\tilde{A}<\tilde{B}\in\,]0,1[$ such that the control limit policy $(\tilde{L},\tilde{M})$ for the limits $\tilde{A},\tilde{B}$ is also optimal.
	
	a) We first show $\tilde{A}\leq A$ and $\tilde{B}\geq B$ but we will omit the proof for $\tilde{B}\geq B$ due to similarity.\\
	Let $l:=l_0$ and $g(x):=g(x,x_0,l_0)$ denote the constant and function introduced in the proof of Proposition \ref{200911e}, yielding $\rho=r+l$ and $u'(x)=g(x)$ on $[A,B]$. Let $f$ be as in \eqref{200911j} and let $D=x(1-x)\bigl(\mu-r-\sigma^2x\bigr)\frac{d}{dx}
	+\frac{1}{2}\sigma^2x^2(1-x)^2\frac{d^2}{dx^2}$ denote the generator from \eqref{eq:generator}. We have
	\begin{equation}\label{280911b}
	Du(x)=-f(x)+l\mbox{ on }[A,B],~g(A)=\frac{\gamma}{1+A\gamma},~
	g'(A)=-\frac{\gamma^2}{(1+A\gamma)^2}.
	\end{equation}
	What we need to show is
	\begin{equation}\label{280911c}
	u'(x)<\frac{\gamma}{1+\gamma x}\mbox{ for all }A<x\in\,]0,1[.
	\end{equation}
	We note that from Proposition \ref{200911e} we already know that $\leq$ holds for all $x\in\,[0,1].$
	and $u'(x)<0$ for all $x\geq B$, and so we have to consider $u'$ and hence $g$ on $[A,B]$. We further note that $\tilde{u}(x):=\int\limits_0^xg(y,x_0,l)dy$ is a classical solution to $D\tilde{u}(x)=-f(x)+l$ on $]0,1[$ (cf.$~$\cite{IS06}).
	Due to $u'(x)\leq\frac{\gamma}{1+\gamma x}$, every $x\in [A,B]$ with $u'(x)=\frac{\gamma}{1+\gamma x}$ satisfies $u''(x)=-\frac{\gamma^2}{(1+\gamma x)^2}$. Now \eqref{280911b} yields
	\begin{equation*}
	u''(x)=\frac{1}{\sigma^2x^2(1-x)^2}\bigl(
	2l-2(\mu-r)x+\sigma^2x^2-2x(1-x)\bigl(\mu-r-\sigma^2x\bigr)u'(x)
	\bigr).
	\end{equation*}
	To use the same argument as in \cite{IS06}, p. 932, we introduce the function
	\begin{equation*}
	v(x):=\frac{1}{\sigma^2x^2(1-x)^2}\Bigl(
	2l-2(\mu-r)x
	+\sigma^2x^2-2x(1-x)\bigl(\mu-r-\sigma^2x\bigr)\Bigl(\frac{\gamma}{1+\gamma x}\Bigr)
	\Bigr)+\frac{\gamma^2}{(1+\gamma x)^2},
	\end{equation*}
	which coincides with the derivative of $x\mapsto g(x)-\frac{\gamma}{1+\gamma x}$ in every $x\in [A,B]$ that satisfies $g(x)=\frac{\gamma}{1+\gamma x}$. We then calculate that
	$$v(x)=\frac{p(x)}{\sigma^2x^2(1-x)^2(1+\gamma x)^2},$$
	where $p$ is a polynomial of degree two, since the terms of degree three and four all cancel. Therefore $v(x)=0$ and hence $g(x)=\frac{\gamma}{1+\gamma x}$ has at most two solutions on $[A,B]$. But a maximum of $x\mapsto g(x)-\frac{\gamma}{1+\gamma x}$ on $[A,B]$ other than $A$ would imply at least three roots of its derivative on $[A,B]$ and hence three roots of $v$ and therefore \eqref{280911c} follows.
	Using $u$ we can calculate 
	\begin{align*}
	J_{v_0,\tilde{h}_0}\bigl(\tilde{L},\tilde{M}\bigr)
	&=r+l+\liminf\limits_{T\to\infty}\frac{1}{T}E\Biggl(
	\int\limits_0^T\Bigl(u'\bigl(\tilde{h}_s\bigr)\bigl(1+\tilde{h}_s\gamma\bigr)-\gamma\Bigr)
	\frac{1}{\tilde{V}_s}d \tilde{L}_s\\
	&~~~-\int\limits_0^T\Bigl(u'\bigl(\tilde{h}_s\bigr)\bigl(1-\tilde{h}_s\gamma\bigr)
	+\gamma\Bigr)	\frac{1}{\tilde{V}_s}d\tilde{M}_s
	+\int\limits_0^T\Bigl(Du\bigl(\tilde{h}_s\bigr)+f\bigl(\tilde{h}_s\bigr)-l\Bigr)ds\Biggr)\\
	&\leq r+l+\liminf\limits_{T\to\infty}\frac{1}{T}E\Biggl(
	\int\limits_0^T\Bigl(u'\bigl(\tilde{h}_s\bigr)\bigl(1+\tilde{h}_s\gamma\bigr)-\gamma\Bigr)
	\frac{1}{\tilde{V}_s}d \tilde{L}_s\Biggr)\\
	&=r+l+\Bigl(u'\bigl(\tilde{A}\bigr)\bigl(1+\gamma\tilde{A}\bigr)-\gamma\Bigr)
	\limsup\limits_{T\to\infty}\frac{1}{T}E\Biggl(
	\int\limits_0^T\frac{1}{\tilde{V}_s}d \tilde{L}_s\Biggr).
	\end{align*}
	Therefore $\tilde{A}>A$ would imply $J_{v_0,\tilde{h}_0}\bigl(\tilde{L},\tilde{M}\bigr)<r+l=\rho$ due to \eqref{280911c} and Lemma \ref{280911d}, which would pose a contradiction to the optimality of $(\tilde{L},\tilde{M})$.\leer
	
	b) It remains to prove $\tilde{A}\geq A$ and $\tilde{B}\leq B$. Here again, we restrict ourselves to showing $\tilde{A}\geq A$. Instead of $u$ we want to use the function
	\begin{equation}\label{280911e}
	\tilde{u}(x):=
	\begin{cases}
	u(A)+\int\limits_A^xg(y,x_0,l)dy, &0<x\leq B,\\
	u(B)+\Gamma_0(x,B), &x>B,
	\end{cases}
	\end{equation}
	on $]0,1[$, which is a classical solution to $D\tilde{u}(x)=-f(x)+l$ due to the proof of Proposition \ref{200911e}. Furthermore, since $g$ is a limit of functions $g_n$ that satisfy (ix) of Theorem \ref{200911h}, we have
	\begin{equation}\label{280911f}
	\tilde{u}'(x)=g(x,x_0,l)\leq\frac{\gamma}{1+\gamma x}\mbox{ for all }A\neq x\in\,]0,1[.
	\end{equation}
	Now the same argument as in a) applies here and so the inequality in \eqref{280911f} is strict.
	We use $\tilde{u}$ instead of $u$ to calculate as above 
	\begin{align*}
	J_{v_0,\tilde{h}_0}\bigl(\tilde{L},\tilde{M}\bigr)
	\leq r+l+\Bigl(\tilde{u}'\bigl(\tilde{A}\bigr)\bigl(1+\gamma\tilde{A}\bigr)-\gamma\Bigr)
	\limsup\limits_{T\to\infty}\frac{1}{T}E\Biggl(
	\int\limits_0^T\frac{1}{\tilde{V}_s}d \tilde{L}_s\Biggr),
	\end{align*}
	where we have used $\tilde{B}\geq B>A$ and hence $\tilde{u}'\bigl(\tilde{B}\bigr)\bigl(1-\tilde{B}\gamma\bigr)+\gamma=u'\bigl(\tilde{B}\bigr)\bigl(1-\tilde{B}\gamma\bigr)+\gamma\geq 0$. The strictness of the inequality \eqref{280911f} and Lemma \ref{280911d} therefore yield $\tilde{A}=A$.
\end{proof}

The following theorem is a main result of this section.

\begin{satz}\label{280911g}
	Let $(\delta_n)_{n\in\N}\in\,]0,1[^\N$ be a sequence of trading proportions defining the fixed costs in the sense of \eqref{cost} for a fixed common $\gamma>0$ defining the proportional costs and satisfying $\gamma<1-\sup\limits_{n\in\N}\delta_n$. Let the Merton fraction $\hat{h}:=\frac{\mu-r}{\sigma^2}$ satisfy $\hat{h}\in\,]0,1[$ and suppose $\lim\limits_{n\to\infty}\delta_n=0$. Then there exist constants $a_0<b_0\in\R$ and $\rho_0>0$ such that the corresponding constants $a_{n},\alpha_{n},\beta_{n}$ and $b_{n}$ given by Theorem \ref{210611a} for every $n\in\N$ together with the optimal growth rates $\rho_n$ satisfy
	\begin{equation*}
	\lim\limits_{n\to\infty}a_{n}=a_0
	=\lim\limits_{n\to\infty}\alpha_{n},~
	\lim\limits_{n\to\infty}\beta_{n}=b_0
	=\lim\limits_{n\to\infty}b_{n},~\lim\limits_{n\to\infty}\rho_n=\rho_0
	\end{equation*}
	and the control limit policy for the limits $A:=\varphi(a_0),B:=\varphi(b_0)$ is optimal for the optimization problem in the portfolio model with only proportional costs described in Section \ref{280911i} with the optimal growth rate $\rho_0$.
\end{satz}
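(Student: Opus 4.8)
The plan is to deduce the theorem from the machinery already assembled: the subsequential convergence of Proposition~\ref{200911b}, the limiting HJB-solution of Proposition~\ref{200911e}, and the uniqueness of the optimal boundaries from Theorem~\ref{280911h}. The only genuinely new step is to upgrade ``along a subsequence'' to ``along the whole sequence'', which will be a routine compactness argument once uniqueness is in hand.

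First I would record that the boundary sequences are precompact in $\R$: inequality \eqref{200911d} (from the proof of Proposition~\ref{200911b}) gives $\sup_n b_n<\infty$ and $\inf_n a_n>-\infty$, and since $a_n<\alpha_n\le\beta_n<b_n$ all four sequences are bounded; likewise $\rho_n=r+l_n$ is bounded above, e.g. by $r+f(\hat h)$ via Theorem~\ref{210611a}(iv). Next, given any subsequence of $(\delta_n)$, Proposition~\ref{200911b} (which is itself a statement about subsequences) yields a further subsequence $(\delta_{n_k})$ along which $x_{n_k}\to x_0$, $a_{n_k}\to a_0=\lim_k\alpha_{n_k}$ and $\beta_{n_k}\to b_0=\lim_k b_{n_k}$, with $a_0<b_0$ strictly, since \eqref{221011c} ($\inf_n|x_n-\alpha_n|>0$, $\inf_n|x_n-\beta_n|>0$) prevents the interval from collapsing. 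Proposition~\ref{200911e} then hands us $u\in C^2([0,1],\R)$ satisfying the hypotheses of Theorem~\ref{200911f} with $A:=\varphi(a_0)<B:=\varphi(b_0)$, so the control limit policy for $A,B$ is optimal. Invoking the uniqueness part of Theorem~\ref{280911h}, there is only one such pair, so $a_0$ and $b_0$ cannot depend on the chosen subsequence. Since every subsequence of $(a_n)$, resp.\ $(b_n)$, thus has a further subsequence converging to the same $a_0$, resp.\ $b_0$, the full sequences converge, and Proposition~\ref{240611a} upgrades this to $\alpha_n\to a_0$, $\beta_n\to b_0$.

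For the growth rates I would use that $\delta\mapsto\rho(\delta)=r+l(\delta)$ is non-increasing (lowering the fixed costs cannot decrease the optimal growth rate, as already used in the proof of Proposition~\ref{200911e}); from $\delta_n\to 0$ together with boundedness this gives $\rho_n\to\rho_0:=\lim_{\delta\downarrow 0}\rho(\delta)=\sup_{\delta\in]0,1[}\rho(\delta)$, and $\rho_0\ge\rho_n>r\ge 0$, hence $\rho_0>0$. (Alternatively, one may avoid the monotonicity: along a convergent subsequence the limiting $l$ of Proposition~\ref{200911e} is pinned down by $A,B$ through the first-order ODE for $v'$ together with its two boundary values, hence is subsequence-independent, so $(\rho_n)$ converges anyway.) Applying Proposition~\ref{200911e} once more along a subsequence as above, the limit $u$ satisfies the hypotheses of Theorem~\ref{200911f} with exactly $l=\rho_0-r$ and $A=\varphi(a_0)$, $B=\varphi(b_0)$; hence the control limit policy for $A,B$ is optimal for the pure proportional cost problem, and the standard verification argument identifies its growth rate --- and therefore $\rho_{v_0,h_0}$ --- as $r+l=\rho_0$.

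I expect the main obstacle to be conceptual rather than computational, and essentially already dispatched: everything rests on being allowed to feed an \emph{arbitrary} convergent subsequence into Proposition~\ref{200911e} and then appeal to the uniqueness of $(A,B)$ in Theorem~\ref{280911h}. The two points requiring a little care are (i) that the limiting interval $[A,B]$ is non-degenerate, for which \eqref{221011c} is precisely the needed input, and (ii) that the limiting growth-rate parameter produced by Proposition~\ref{200911e} is the same $\rho_0-r$ for every subsequence, which the monotonicity of $\rho(\delta)$ (or the ODE argument) guarantees.
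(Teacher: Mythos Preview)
Your proposal is correct and is exactly the argument the paper intends: the paper's proof is a single sentence (``direct consequence of Proposition~\ref{200911b} and~\ref{200911e} together with Theorem~\ref{280911h}''), and you have spelled out the natural way to combine these three ingredients, namely the subsequence-of-a-subsequence compactness argument using uniqueness of $(A,B)$ to pin down the limit. Your attention to the two delicate points---non-degeneracy of the limiting interval via \eqref{221011c} and convergence of $\rho_n$ via monotonicity of $\delta\mapsto\rho(\delta)$---is appropriate and matches what the paper uses (the latter appears explicitly in the proof of Proposition~\ref{200911e}).
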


\begin{proof}
	This is now a direct consequence of Proposition \ref{200911b} and \ref{200911e} together with Theorem \ref{280911h}.
\end{proof}


\begin{figure}[h]
	\centering
	\includegraphics[trim = 0mm 30mm 0mm 40mm,width=5cm]{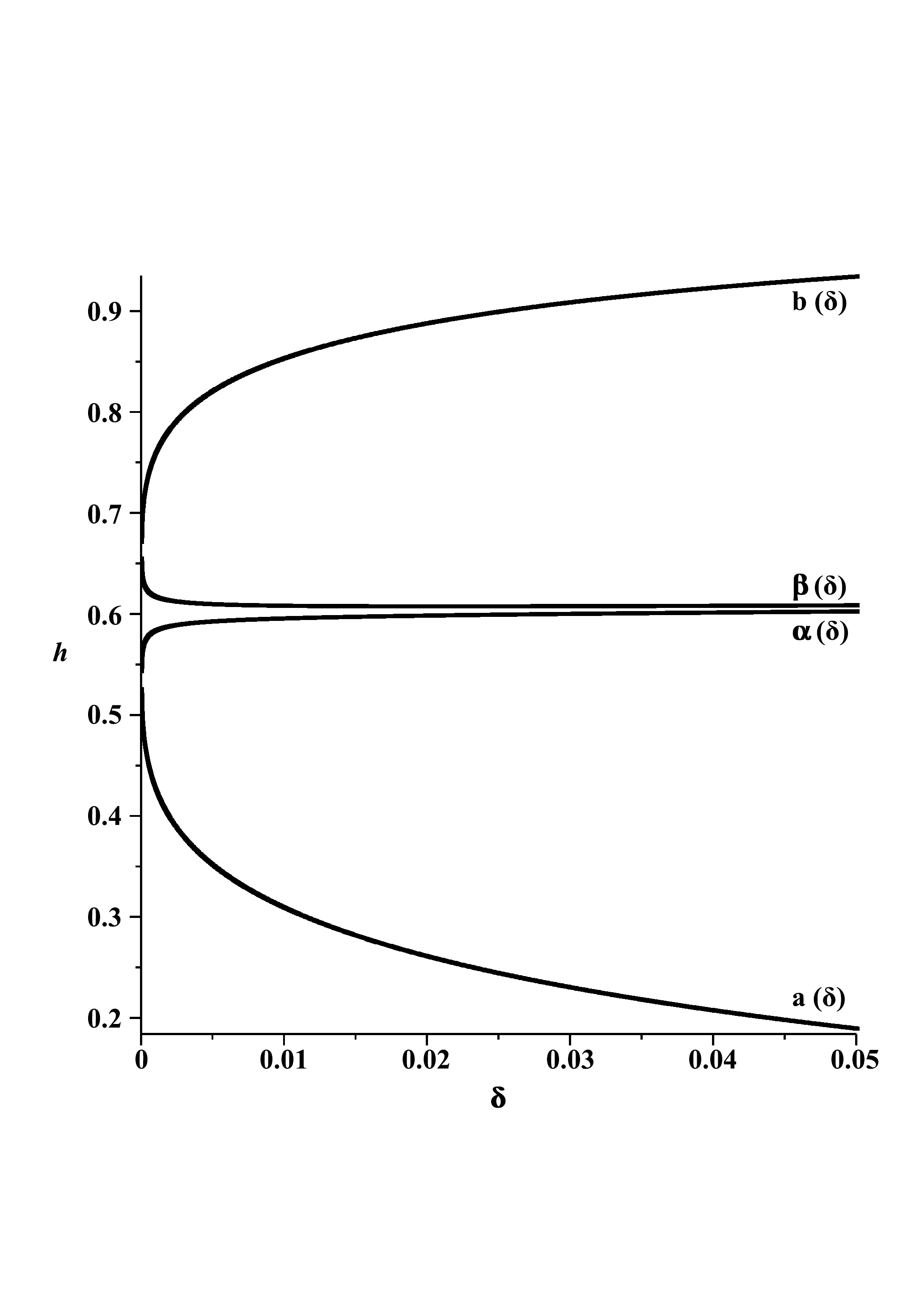}%
	\hspace{12mm}
	\includegraphics[trim = 0mm 30mm 0mm 40mm,width=5cm]{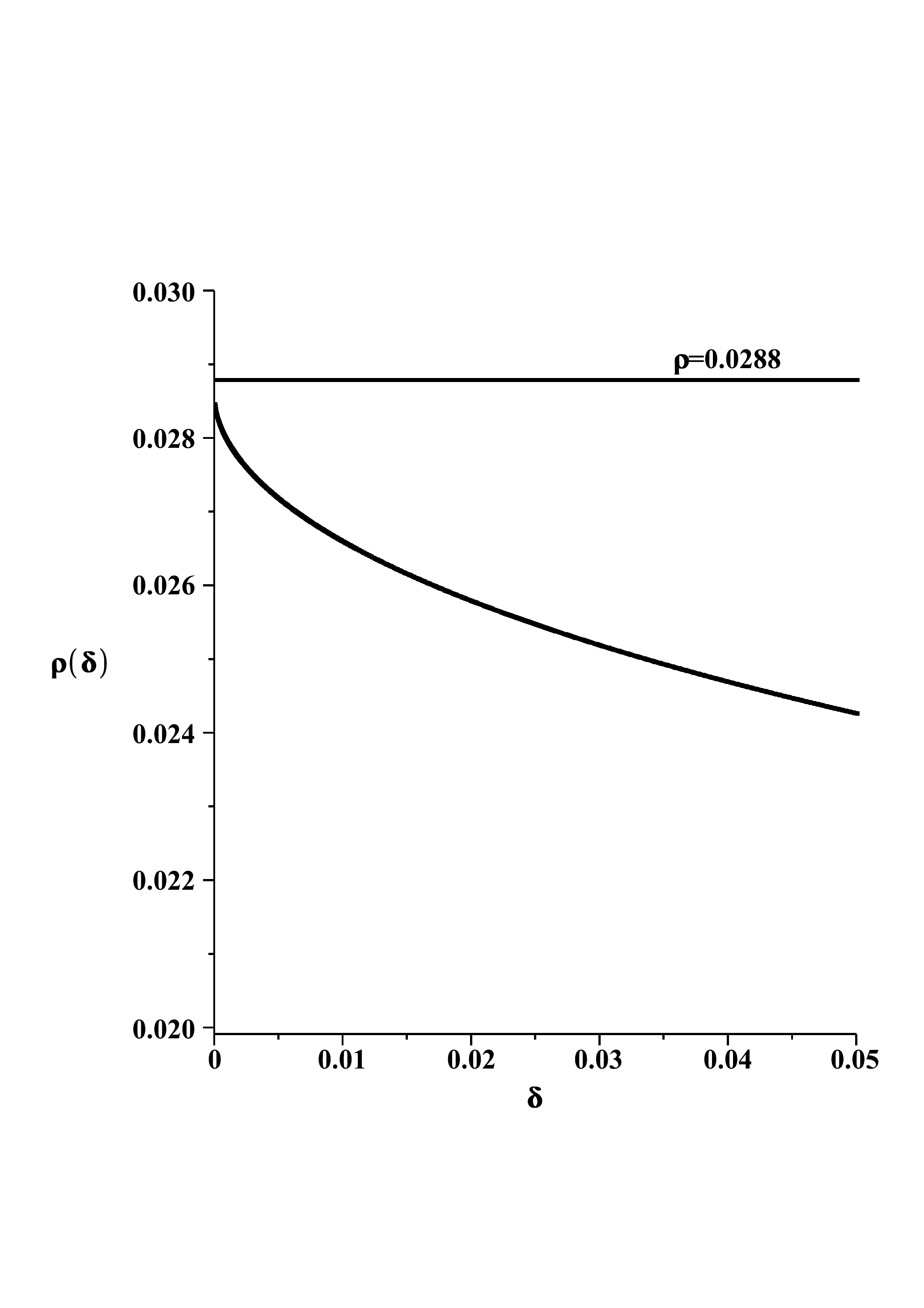}%
	\vspace{-4mm}
	\caption{Optimal boundaries and optimal growth rate as a function of $\delta$ for $r=0,~\sigma=0.4$,  $\mu=0.096$, and $\gamma=0.003$}%
	\label{pic:Diff10}%
\end{figure}


\subsection{Weak convergence of the optimal strategies}

We have proved so far the convergence of the boundaries $a_n,\alpha_n,\beta_n,b_n$ to the limits $a_0,b_0$ and the convergence of the corresponding optimal growth rates $\rho_n$ to $\rho_0$. Our aim in this subsection is the weak convergence of the corresponding risky fraction processes induced by the proportional constant boundary strategies $\widetilde{K}(a_n,\alpha_n,\beta_n,b_n),\,n\in\N$, to the risky fraction process induced by the control limit policy for the limits $a_0,b_0$, which is a diffusion with instantaneous reflection on $[a_0,b_0]$.

We begin with a characterization of the weak convergence via tightness and convergence in finite-dimensional distributions. For a brief review of the theory we refer 
to \cite{EK86} and \cite{JS03}. We will make use of the following

\begin{prop}\label{101011c}
	Let $(X^n)_{n\in\N_0}$ be a sequence of càdlàg processes defined on probability spaces $(\Omega_n,\mathcal{A}_n,P_n),\,n\in\N_0$. If $X^0$ is continuous, then we have for every dense subset $D\subseteq [0,\infty[$
	$$\Bigl(X^n\xrightarrow[]{~\mathcal{L}(D)~}X^0\mbox{ and } (X^n)_{n\in\N} \mbox{ tight}\Bigr)
	\Longleftrightarrow \Bigl(X^n\xrightarrow[]{~\mathcal{L}~}X^0\Bigr).$$
	Here, $\mathcal{L}(D)$ stands for convergence of the finite-dimensional distributions along $D$ and $\mathcal{L}$ for weak convergence. Here, a sequence $(X^n)_{n\in\N}$ converges {weakly} to $X^0$ if the laws $\mathcal{L}(X^n)$ converge weakly to $\mathcal{L}(X^0)$ in the set of probability measures on the Skorokhod space $D_{\R}[0,\infty)$.
\end{prop}


Since the limiting risky fraction process is a reflected diffusion on $[a_0,b_0]$, it is also a continuous process. Instead of proving the tightness of the sequence of our processes we will prove their $C$-tightness, which ensures tightness of the sequence and the continuity of the limiting process. We will not use this continuity but it is not a greater task to prove.

\begin{defi}
	A sequence $(X^n)_{n\in\N}$ of processes defined on probability spaces $(\Omega_n,\mathcal{A}_n,P_n),\,n\in\N$, is called $C$-\textit{tight} if it is tight and if for every probability measure $P$ on the Skorokhod space $D_\R[0,\infty[$ and every subsequence $(\mathcal{L}(X^{n_k}))_{k\in\N}$ that weakly converges to $P$, we necessarily have $P(C([0,\infty[,\R))=1$, where $C([0,\infty[,\R)$ denotes the set of continuous functions $f:[0,\infty)\rightarrow \R$.
\end{defi}

\begin{prop}\label{051011a}
	For a sequence of processes $(X^n)_{n\in\N}$ defined on probability spaces $(\Omega_n,\mathcal{A}_n,P_n)$, $n\in\N$, there is equivalence between
	\begin{enumerate}[{(}i{)}]
		\item $(X^n)_{n\in\N}$ is $C$-tight.
		\item For all $N\in\N$, $\varepsilon>0$ and $\eta>0$ there are $n_0\in\N$ and $\theta>0$ such that for all $n\geq n_0$
		\begin{equation*}
		P_n\Bigl(\sup\bigl\{|X^n_s-X^n_t|:s,t\in [0,N], |s-t|\leq \theta\bigr\}>\eta\Bigr)\leq \varepsilon.
		\end{equation*}
	\end{enumerate}
\end{prop}

%

Using this characterization, we can prove the $C$-tightness of our sequence of transformed risky fraction processes on $\R$. But since this is only a consequence of the convergence of the boundaries, we formulate it therefore independently from the optimality of the involved constant boundary strategies.

\begin{lem}\label{101011d}
	Let $a_n<\alpha_n\leq \beta_n<b_n,\,n\in\N_0,$ be a sequence of boundaries in $\R$ and suppose there exist $a_0<b_0\in\R$ satisfying
	\begin{equation}\label{121111a}
	\lim\limits_{n\to\infty}a_n=\lim\limits_{n\to\infty}\alpha_n=a_0<b_0=\lim\limits_{n\to\infty}\beta_n=\lim\limits_{n\to\infty}b_n.
	\end{equation}
	We denote by $Y^n$ the controlled diffusion corresponding to the constant boundary strategy $\overline{K}(a_n,\alpha_n,\beta_n,b_n)=$ $((\tau_k^n,\xi_{k}^n))_{k\in\N_0}$ 
	with $\xi_0^n=0$ and starting value $y_{0,n}\in \,]a_n,b_n[$ for every $n\in\N$. Then the sequence $(Y^n)_{n\in\N}$ is $C$-tight.
\end{lem}

\begin{proof}
	From the definition of a constant boundary strategy we have
	\begin{equation*}
	Y^n_t=y_{0,n}+\Bigl(\mu-r-\frac{\sigma^2}{2}\Bigr)\cdot t+\sigma W_t+\sum\limits_{k=0}^\infty \xi_k^n\mathbbm{1}_{\{\tau_k^n\leq t\}},\;\;\xi_0^n=0,\;\;\xi_{k}^n=
	\begin{cases}
	\alpha_n-a_n, &Y_{\tau_{k}^n}^{n}=a_n,\\
	\beta_n-b_n, &Y_{\tau_{k}^n}^{n}=b_n.
	\end{cases}
	\end{equation*}
	To prove (ii) of Proposition \ref{051011a} we take $N\in\N,\,\varepsilon>0,\,\eta>0$ and define $c:=\mu-r-\frac{\sigma^2}{2}$. From \eqref{121111a} we can assume without loss of generality $\inf\limits_{n\in\N}\beta_n-\alpha_n>0$ and take for simplicity $\eta<\inf\limits_{n\in\N}\frac{\beta_n-\alpha_n}{2}$. From the continuity of the paths $t\mapsto \sigma W_t+c t$ we can find some $\theta>0$ such that
	\begin{equation*}
	P\Bigl(\sup\bigl\{|\sigma(W_t-W_s)+c(t-s)|:s,t\in [0,N], |s-t|\leq \theta\bigr\}<\frac{\eta}{2}\Bigr)\geq 1-\varepsilon.
	\end{equation*}
	Furthermore, we can take by \eqref{121111a} some $n_0\in\N$ such that $\max\{b_n-\beta_n,\alpha_n-a_n\}\leq {\eta}/{2}.$ for all $n\geq n_0$.
	For every $\omega\in A:=\Bigl\{\omega\,:\,\sup\bigl\{|\sigma(W_t(\omega)-W_s(\omega))+c(t-s)|:s,t\in [0,N], |s-t|\leq \theta\bigr\}<\frac{\eta}{2}\Bigr\}$
	and every $n\geq n_0$ and $s,t\in [0,N]$ with $|s-t|\leq \theta$ we then have
	\begin{equation*}
	|Y_t^n(\omega)-Y_s^n(\omega)|\leq \max\{b_n-\beta_n,\alpha_n-a_n\}
	+|\sigma(W_t(\omega)-W_s(\omega))+c(t-s)|\leq \eta,
	\end{equation*}
	since there are only jumps in the same direction due to $\eta<\inf\limits_{n\in\N}\frac{\beta_n-\alpha_n}{2}$ and more than one jump then necessarily cancels some of the distance covered by the process $(\sigma W_t+c t)_{t\geq 0}$. Therefore it holds for all $n\geq n_0$
	\begin{equation*}
	P\Bigl(\sup\bigl\{|Y_t^n-Y_s^n|:s,t\in [0,N], |s-t|\leq \theta\bigr\}>\eta\Bigr)\leq 1-P(A)\leq\varepsilon.
	\end{equation*}
	
\end{proof}

Now we turn our attention to the convergence in finite-dimensional distributions. Here, it is again only a consequence of the convergence of the boundaries and no optimality is needed. What we actually will show is $\lim\limits_{n\to\infty} Y^n_t=Y_t$ for the following limit process $Y$.

We take $A<B\in\,]0,1[$ and the corresponding control limit policy $(L,M)\in\mathcal{A}_{v_0,h_0}$ for the limits $A,B$ from Definition \ref{200911g}. By 
It\^o's formula we have
\begin{equation}\label{051011b}
Y_t=y_0+\Bigl(\mu-r-\frac{\sigma^2}{2}\Bigr)\cdot t
+\sigma W_t+\frac{1+\gamma A}{A(1-A)}\int\limits_0^t
\frac{1}{V_s}d L_s
-\frac{1-\gamma B}{B(1-B)}\int\limits_0^t
\frac{1}{V_s}d M_s,
\end{equation}
where $Y=\psi(h)$, $\psi$ is the transformation function from Section \ref{reform} and $h$ is the corresponding risky fraction process. Now we define $c:=\mu-r-\frac{\sigma^2}{2}$ and
\begin{equation*}
Z_t^1:=\frac{1+\gamma A}{A(1-A)}\int\limits_0^t
\frac{1}{V_s}d L_s,~~Z_t^2:=
\frac{1-\gamma B}{B(1-B)}\int\limits_0^t
\frac{1}{V_s}d M_s,
\end{equation*}
and thus \eqref{051011b} becomes
\begin{equation}\label{051011d}
Y_t=y_0+c t+\sigma W_t+Z_t^1-Z_t^2.
\end{equation}
The processes $Z^1, Z^2$ are nondecreasing and we deduce for $a_0:=\psi(A)$ and $b_0:=\psi(B)$
\begin{equation}\label{051011c}
\int\limits_0^t\mathbbm{1}_{\{Y_s>a_0\}}dZ_s^1=\frac{1+\gamma A}{A(1-A)}\int\limits_0^t\mathbbm{1}_{\{h_s>A\}}\frac{1}{V_s}d L_s=0,
\end{equation}
since $\int\limits_0^t\mathbbm{1}_{\{h_s>A\}}dL_s=0$ by Definition \ref{200911g}, and analogously
\begin{equation}\label{131111b}
\int\limits_0^t\mathbbm{1}_{\{Y_s<b_0\}}dZ_s^2=0 \mbox{ and }Y_t\in[a_0,b_0]\mbox{
	for all }t\geq 0.
\end{equation}
Therefore we are in the framework to explicitly obtain 
the reflected processes as a solution to the so-called \textit{Skorokhod problem}, c.f.$~$Proposition 1.3 and Theorem 1.4 in \cite{KLRS07}.


\begin{lem}\label{051011e}
	In the situation of Lemma \ref{101011d} let $(Y^n)_{n\in\N}$ be a sequence of controlled processes with starting values $y_{0,n},\,n\in\N,$ and let $Y$ denote the process from \eqref{051011d} with starting value $y_0\in\,]a_0,b_0[$. We assume additionally $y_{0,n}=y_0\in \,]a_n,b_n[$ for every $n\in\N$. Then we have for almost all $\omega\in\Omega$
	\begin{equation*}
	\lim\limits_{n\to\infty}Y_t^n(\omega)=Y_t(\omega)\,\mbox{ for all }\,t\geq 0\mbox{ and hence } Y^n\xrightarrow[]{~\mathcal{L}(R_{\geq 0})~}Y.
	\end{equation*}
\end{lem}

\begin{proof}
	From the definition of a constant boundary strategy we have
	\begin{equation}
	\label{051011f}
	Y^n_t=y_0+c t+\sigma W_t+\sum\limits_{k=0}^\infty \xi_k^n\mathbbm{1}_{\{\tau_k^n\leq t\}},\;\;\xi_0^n=0,\;\;\xi_{k}^n=
	\begin{cases}
	\alpha_n-a_n, &Y_{\tau_{k}^n}^{n}=a_n,\\
	\beta_n-b_n, &Y_{\tau_{k}^n}^{n}=b_n.
	\end{cases}
	\end{equation}
	Now we can take a set $D$ of probability 1 such that the representations in \eqref{051011d}, and \eqref{051011f} hold pathwise for all $\omega\in D$ with continuity of the involved processes $W,Z^1,Z^2$ and all $n\in\N$.
	
	We define $\tau_0:=0$, $\tau_1:=\inf\bigl\{t>0:Y_t\notin \,]a_0,b_0[\bigr\}$ and inductively with $\inf\emptyset=\infty$ $$\tau_n:=\inf\bigl\{t>\tau_{n-1}:
	Y_t\in \{a_0,b_0\}\setminus\{Y_{\tau_{n-1}}\}\bigr\}\mbox{ on }\{\tau_{n-1}<\infty\}\mbox{ for all }n\geq 2.$$

	Comparing \eqref{051011f} and \eqref{051011d} it suffices to show for a fixed $\omega\in D$ and all $n\in\N_0$
	\begin{equation}\label{051011g}
	\lim\limits_{m\to\infty}\sum\limits_{k=1}^\infty \xi_k^m(\omega)\mathbbm{1}_{\{\tau_k^m(\omega)\leq t\}}=\bigl(Z_t^1-Z_t^2\bigr)(\omega)\mbox{ for all }t\in [\tau_n(\omega),
	\tau_{n+1}(\omega)[,\,\tau_n(\omega)<\infty.
	\end{equation}
	We define $Z_t:=Z_t^1-Z_t^2$ and prove \eqref{051011g} by induction on $n$ for this fixed $\omega\in D$ but to simplify matters we will suppress it in the following. We also note here that since the following arguments are pathwise, they are just about (deterministic) continuous functions and the (deterministic) jump-representation in \eqref{051011f}.
	
	a) For $n=0$ and $t<\tau_1$ we have $Y_s\in\,]a_0,b_0[$ and hence $Z^1_s=Z^2_s=0$ for all $s\leq t$ due to \eqref{051011c} and \eqref{131111b}. Furthermore, it holds $\lim\limits_{m\to\infty}a_m=a_0$ and $\lim\limits_{m\to\infty}b_m=b_0$ and we can therefore find some $m_0\in\N$ such that for all $m\geq m_0$
	$$y_0+cs+\sigma W_s\stackrel{\eqref{051011d}}{=}Y_s\in\,]a_m,b_m[\,\mbox{ for all }s\leq t\,\mbox{ and hence }\sum\limits_{k=1}^\infty \xi_k^m(\omega)\mathbbm{1}_{\{\tau_k^m(\omega)\leq t\}}\stackrel{\eqref{051011f}}{=}0=Z_t.$$
	
	b) Now we take $n\in\N$ and assume \eqref{051011g} to hold for $n-1$ and also $\tau_n<\infty$. Without loss of generality we further assume $Y_{\tau_n}=a_0$ and begin with $t=\tau_n$.
	
	b1) Using the convergence of the boundaries, we can find some $\tilde{\varepsilon}$ satisfying
	\begin{equation}\label{131111c}
	0<\tilde{\varepsilon}<\inf\limits_{m\geq \widetilde{m}_0}\frac{b_m-a_m}{4}\,\mbox{ for some }\widetilde{m}_0\in\N.
	\end{equation}
	Since the path $s\mapsto Y_s$ is continuous, we can then take $\tilde{u}<\tau_n$ such that $Y_s\in\, ]a_0,a_0+\tilde{\varepsilon}[$ for all $s\in ]\tilde{u},\tau_n[$. Let then $u:=\argmax\limits_{s\in [\tilde{u},\tau_n]}Y_s$, i.e.$~$we have
	$Y_s\in\,]a_0,Y_u]$ for all $s\in[u,\tau_n[$.
	Now we take $\varepsilon>0$ such that $\varepsilon<\frac{Y_u-a_0}{4}$ and also $\widetilde{m}_0\leq m_0\in\N$ such that
	\begin{equation}\label{131111d}
	|a_0-\alpha_m|,|\alpha_m-a_m|,\bigl|Y_u^m-Y_u\bigr|<\varepsilon\mbox{ for all }
	m\geq m_0.
	\end{equation}
	We fix $m\geq m_0$ and obtain from \eqref{131111c} and \eqref{131111d}
	\begin{equation}\label{131111e}
	\bigl|Y_u^m-\alpha_m\bigr|\leq\bigl|Y_u^m-Y_u\bigr|+|Y_u-a_0|+|a_0-\alpha_m|<\frac{b_m-a_m}{2}.
	\end{equation}
	By the definition of $u$, the path of $(c s+\sigma W_s)_{s\in[u,\tau_n]}$ covers a distance of at most $|Y_u-a_0|<\tilde{\varepsilon}$ in the upward direction starting at any time between $u$ and $\tau_n$.
	In the case that there is no jump of $Y^m$ in $a_m$ on $[u,\tau_n]$, we deduce $b_m-Y_u^m>|Y_u-a_0|$ from \eqref{131111e} and therefore $Y^m$ cannot reach $b_m$ on $[u,\tau_n]$. In the case of jumps in $a_m$ to $\alpha_m$ however, we obtain $b_m-\alpha_m>|Y_u-a_0|$ from \eqref{131111e} and $Y^m$ still cannot reach $b_m$ on $[u,\tau_n]$.
	Now we need to count the jumps of $Y^m$ on $[u,\tau_n]$.  Due to \eqref{131111d} it holds $$\bigl|Y_u^m-a_m\bigr|\geq |Y_u-a_0|-2\varepsilon.$$
	But starting from any new minimum point $s\in[u,\tau_n]$ of $Y$, e.g.$~$each time $Y^m$ reaches $a_m$, the path of $(c t+\sigma W_t)_{t\in[s,\tau_n]}$ covers a distance of at most $|Y_u-a_0|-|Y_u-Y_s|$ in the downward direction on $[s,\tau_n]$ by the definition of $u$. Therefore, after the first jump in $a_m$, we only have a distance of at most $2\varepsilon$ left for jumping to $\alpha_m$ and then returning to $a_m$ again. This way we obtain for the number of possible jumps at $a_m$ of $Y^m$
	\begin{equation}\label{131111f}
	\bigl|\bigl\{k:u<\tau_k^m\leq\tau_n\bigr\}\bigr|\in
	\left[0,\frac{2\varepsilon}{\alpha_m-a_m}+1\right].
	\end{equation}
	Finally, we have
	\begin{equation*}
	Z_u=Z_{\tau_n}\mbox{ and }
	\sum\limits_{k=1}^\infty \xi_k^m
	\mathbbm{1}_{\{\tau_k^m\leq \tau_n\}}
	=\sum\limits_{k=1}^\infty \xi_k^m
	\mathbbm{1}_{\{\tau_k^m\leq u\}}
	+\sum\limits_{k=1}^\infty (\alpha_m-a_m)
	\mathbbm{1}_{\{u<\tau_k^m\leq \tau_n\}}
	\end{equation*} and we therefore get from \eqref{131111f}
	\begin{align*}
	\biggl|\sum\limits_{k=1}^\infty \xi_k^m
	\mathbbm{1}_{\{\tau_k^m\leq \tau_n\}}-Z_{\tau_n}\biggr|
	&\leq\biggl|\sum\limits_{k=1}^\infty \xi_k^m
	\mathbbm{1}_{\{\tau_k^m\leq u\}}-Z_u\biggr|+\biggl|\sum\limits_{k=1}^\infty (\alpha_m-a_m)
	\mathbbm{1}_{\{u<\tau_k^m\leq \tau_n\}}\biggr|\\
	&\leq\left|Y_u^m-Y_u\right|+2\varepsilon+\alpha_m-a_m\leq 4\varepsilon.
	\end{align*}
	
	b2) Now let $t\in \,]\tau_n,\tau_{n+1}[$. We note that our path satisfies $Z_s^2=Z_{\tau_n}^2$ for all $s\in[\tau_n,\tau_{n+1}]$ due to $Y_{\tau_n}=a_0$. By \eqref{051011c} and \eqref{131111b} we are in the framework of instantaneous reflection, see above, and obtain therefore the explicit representation for the path of the process $Z$, whose growth from $\tau_n$ onward is given via
	\begin{equation}\label{061011b}
	Z_t-Z_{\tau_n}=-\inf\limits_{s\in[\tau_n,t]}\bigl(y_0+c s+\sigma W_s+Z_{\tau_n}-a_0\bigr).
	\end{equation}
	Let $\varepsilon>0$ such that $\varepsilon<\frac{b_0-a_0}{4}$. Since $Y_{\tau_n}=a_0$ and $Y$ does not reach $b_0$ on $]\tau_n,\tau_{n+1}[$ we can find by b1) and the convergence of the boundaries some $m_0\in\N$ such that for all $m\geq m_0$ the path of $Y^m$ does not reach $b_m$ on $]\tau_n,\tau_{n+1}[$ and therefore has no jumps at $b_m$ and additionally satisfies
	\begin{equation}\label{061011c}
	\biggl|\sum\limits_{k=1}^\infty \xi_k^m
	\mathbbm{1}_{\{\tau_k^m\leq \tau_n\}}-Z_{\tau_n}\biggr|=\bigl|Y_{\tau_n}^m-Y_{\tau_n}\bigr|<\varepsilon\,\mbox{ and }\bigl|Y_{\tau_n}^m-a_m\bigr|,|\alpha_m-a_m|
	<\varepsilon.
	\end{equation}
	Between $\tau_n$ and $t$ the path of $(c s+\sigma W_s)_{s\in[\tau_n,t]}$ covers a distance of $\bigl|Z_t-Z_{\tau_{n+1}}\bigr|$ in the downward direction by \eqref{061011b} and from any new minimum point $s\in[t,\tau_{n+1}[$ of $Y$, e.g.$~$each time $Y^m$ reaches $a_m$, also $\bigl|Z_s-Z_{\tau_{n+1}}\bigr|$. Hence, we can calculate the number of possible jumps of $Y^m$ at $a_m$ by the same method as in b1) and get due to $\bigl|Y_{\tau_n}^m-a_m\bigr|<\varepsilon$
	\begin{equation*}
	\bigl|\bigl\{k:\tau_n<\tau_k^m\leq t\bigr\}\bigr|\in
	\left[\frac{Z_t-Z_{\tau_n}-\varepsilon}{\alpha_m-a_m}
	,\frac{Z_t-Z_{\tau_n}}{\alpha_m-a_m}+1 \right].
	\end{equation*}
	Therefore \eqref{061011c} then implies for all $m\geq m_0$
	\begin{align*}
	\biggl|\sum\limits_{k=1}^\infty \xi_k^m
	\mathbbm{1}_{\{\tau_k^m\leq t\}}-Z_t\biggr|
	&\leq\biggl|\sum\limits_{k=1}^\infty \xi_k^m
	\mathbbm{1}_{\{\tau_k^m\leq \tau_n\}}-Z_{\tau_n}\biggr|+\biggl|\sum\limits_{k=1}^\infty (\alpha_m-a_m)
	\mathbbm{1}_{\{\tau_n<\tau_k^m\leq t\}}+Z_{\tau_n}-Z_t\biggr|\\
	&\leq\left|Y_{\tau_n}^m-Y_{\tau_n}\right|+\varepsilon\leq 2\varepsilon.
	\end{align*}
\end{proof}

Now we are able to prove one of the main theorems of this section. It applies especially for the transformed risky fraction processes in the situation of Theorem \ref{280911g}.
\begin{satz}\label{101011i}
	In the situation of Lemma \ref{051011e} we have
	\begin{equation*}
	Y^n\xrightarrow[]{~\mathcal{L}~}Y.
	\end{equation*}
\end{satz}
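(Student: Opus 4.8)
The plan is to combine the two preceding lemmas through the abstract characterization of weak convergence recorded in Proposition \ref{101011c}. That proposition reduces weak convergence on the Skorokhod space to two separate properties -- tightness of the sequence and convergence of the finite-dimensional distributions along a dense subset of $[0,\infty[$ -- provided the limiting process is continuous. So the proof is essentially bookkeeping: check that the limit is continuous, invoke Lemma \ref{101011d} for tightness, and invoke Lemma \ref{051011e} for the finite-dimensional convergence.

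First I would record that $Y$ is continuous. This is immediate from \eqref{051011d} together with \eqref{051011c} and \eqref{131111b}: these exhibit $Y$ as the solution of a Skorokhod reflection problem on $[a_0,b_0]$ with continuous driving path $t\mapsto y_0+ct+\sigma W_t$, the reflection terms $Z^1,Z^2$ being continuous nondecreasing processes. Hence Proposition \ref{101011c} applies with $X^0=Y$ and any dense $D\subseteq[0,\infty[$, for instance $D=[0,\infty[$ itself. Next, since we are in the situation of Lemma \ref{051011e} we are in particular in the situation of Lemma \ref{101011d} (the boundaries converge as in \eqref{121111a} and $y_{0,n}=y_0\in\,]a_n,b_n[$ for every $n$), so that lemma yields that $(Y^n)_{n\in\N}$ is $C$-tight, hence tight. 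Finally, Lemma \ref{051011e} gives the stronger statement that for almost every $\omega$ one has $Y^n_t(\omega)\to Y_t(\omega)$ for all $t\geq0$ simultaneously; in particular, for any $t_1,\dots,t_k\geq0$ the vector $(Y^n_{t_1},\dots,Y^n_{t_k})$ converges $P$-a.s., hence in distribution, to $(Y_{t_1},\dots,Y_{t_k})$, i.e. $Y^n\xrightarrow[]{~\mathcal{L}([0,\infty[)~}Y$. Proposition \ref{101011c} then gives $Y^n\xrightarrow[]{~\mathcal{L}~}Y$, as claimed.

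There is essentially no genuine obstacle at this stage, since all the substantive work has already been carried out: the uniform control of the number of small jumps that produces $C$-tightness in Lemma \ref{101011d}, and the pathwise comparison with the Skorokhod reflection representation that produces the almost sure convergence in Lemma \ref{051011e}. The only points that need a word of care in writing the proof are (i) verifying that the hypotheses of Lemmas \ref{101011d} and \ref{051011e} are exactly what is meant by ``the situation of Lemma \ref{051011e}'', so both may be applied simultaneously, and (ii) noting that the continuity of the limit $Y$ is precisely the condition under which Proposition \ref{101011c} permits us to deduce weak convergence from tightness together with finite-dimensional convergence along a merely dense set rather than along all of $[0,\infty[$ -- although here we in fact have the latter for free.
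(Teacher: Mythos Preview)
Your proposal is correct and follows exactly the paper's approach: the paper's proof consists of a single sentence invoking Proposition \ref{101011c}, Lemma \ref{101011d}, and Lemma \ref{051011e}, and you have simply spelled out how these three results fit together. The additional remarks you make about the continuity of $Y$ and the compatibility of the hypotheses are accurate and helpful elaborations of what the paper leaves implicit.
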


\begin{proof}
	The assertion follows directly from Proposition \ref{101011c}, Lemma \ref{101011d} and Lemma \ref{051011e}.
	
\end{proof}

The same result is also true for the risky fraction processes on $[0,1]$. We use the transformation function $\varphi=\psi^{-1}:\R\rightarrow \,]0,1[$ from Section \ref{reform}.

\begin{satz}\label{101011j}
	In the situation of Lemma \ref{051011e} we have
	\begin{equation*}
	h^n\xrightarrow[]{~\mathcal{L}~}h
	\end{equation*}
	for the corresponding risky fraction processes $h_t^n=\varphi(Y_t^n)$, $n\in\N$, and $h_t=\varphi(Y_t)$.
\end{satz}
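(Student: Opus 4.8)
The plan is to push the weak convergence $Y^n\xrightarrow[]{~\mathcal{L}~}Y$ of Theorem~\ref{101011i} through the fixed, Lipschitz continuous map $\varphi=\psi^{-1}:\R\to\,]0,1[$ from Section~\ref{reform}. Rather than appealing to a general continuous-mapping principle on the Skorokhod space, I would re-use the two ingredients that were already established for the sequence $(Y^n)_{n\in\N}$ --- namely $C$-tightness (Lemma~\ref{101011d}) and pathwise convergence of the time marginals (Lemma~\ref{051011e}) --- since both survive composition with $\varphi$ with essentially no extra work, and then apply Proposition~\ref{101011c} once more.

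First I would check $C$-tightness of $(h^n)_{n\in\N}$. Let $\Lambda$ be a Lipschitz constant for $\varphi$ (recorded below \eqref{psi}). Then for all $s,t\ge 0$ and all $n$,
\[
|h^n_t-h^n_s|=|\varphi(Y^n_t)-\varphi(Y^n_s)|\le\Lambda\,|Y^n_t-Y^n_s|,
\]
so the modulus-of-continuity estimate in condition (ii) of Proposition~\ref{051011a}, which holds for $(Y^n)_{n\in\N}$ by Lemma~\ref{101011d}, transfers at once to $(h^n)_{n\in\N}$: given $N\in\N$, $\varepsilon>0$ and $\eta>0$, apply that estimate for $(Y^n)$ with $\eta$ replaced by $\eta/\Lambda$ to produce the required $n_0$ and $\theta>0$ for $(h^n)$. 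Hence $(h^n)_{n\in\N}$ is $C$-tight.

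Next I would treat the finite-dimensional distributions. By Lemma~\ref{051011e} there is an event of full probability on which $Y^n_t(\omega)\to Y_t(\omega)$ for every $t\ge 0$; since $\varphi$ is continuous this gives $h^n_t(\omega)=\varphi(Y^n_t(\omega))\to\varphi(Y_t(\omega))=h_t(\omega)$ for every $t\ge 0$ on that event, so for every finite collection $0\le t_1<\dots<t_m$ the vectors $(h^n_{t_1},\dots,h^n_{t_m})$ converge almost surely, hence in law, to $(h_{t_1},\dots,h_{t_m})$; that is $h^n\xrightarrow[]{~\mathcal{L}([0,\infty))~}h$. Because $Y$ from \eqref{051011d} is continuous and $\varphi$ is continuous, $h=\varphi(Y)$ is a continuous process; Proposition~\ref{101011c}, applied with the dense set $D=[0,\infty)$, therefore yields $h^n\xrightarrow[]{~\mathcal{L}~}h$, which is the claim.

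The argument has no real obstacle; the one point to be careful about is that one should not simply invoke continuity of the map $x\mapsto\varphi\circ x$ on the Skorokhod space and the continuous mapping theorem, but instead route the argument through $C$-tightness together with the pathwise statement of Lemma~\ref{051011e}, both of which are elementary consequences of the Lipschitz resp.\ continuity properties of $\varphi$.
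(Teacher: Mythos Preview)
Your proof is correct and follows essentially the same route as the paper: the paper's argument is precisely that Lipschitz continuity of $\varphi$ makes Lemmata~\ref{101011d} and~\ref{051011e} go through for $(h^n)_{n\in\N}$ and $h$, after which Proposition~\ref{101011c} applies. Your write-up is simply a more detailed version of this.
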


\begin{proof}
	Since $\varphi$ is Lipschitz continuous, and thus Lemmata \ref{101011d} and \ref{051011e} are also true for $(h^n)_{n\in\N}$ and $h$. Proposition \ref{101011c} then yields the assertion.
	
\end{proof}

\bibliographystyle{apt}
\bibliography{bibs/literatur}

\begin{thebibliography}{10}

\bibitem{AMS96}
{\sc Akian, M., Menaldi, J.~L. and Sulem, A.} (1996).
\newblock On an investment-con- sumption model with transaction costs.
\newblock {\em SIAM J. Control Optim.\/} {\bf 34,} 329--364.

\bibitem{AST01}
{\sc Akian, M., Sulem, A. and Taksar, M.~I.} (2001).
\newblock Dynamic optimization of long-term growth rate for a portfolio with
  transaction costs and logarithmic utility.
\newblock {\em Math. Finance\/} {\bf 11,} 153--188.

\bibitem{AltaroviciEtAl2015}
{\sc Altarovici, A., Muhle-Karbe, J. and Soner, H.~M.} (2015).
\newblock Asymptotics for fixed transaction costs.
\newblock {\em Finance Stoch.\/} {\bf 19,} 363--414.

\bibitem{belak2016utility}
{\sc Belak, C. and Christensen, S.} (2016).
\newblock Utility maximization in a factor model with constant and proportional
  costs.
\newblock {\em Available at SSRN 2774697\/}.

\bibitem{BL82}
{\sc Bensoussan, A. and Lions, J.-L.} (1982).
\newblock {\em Applications of variational inequalities in stochastic control}
  vol.~12 of {\em Studies in Mathematics and its Applications}.
\newblock North-Holland Publishing Co., Amsterdam.
\newblock Translated from the French.

\bibitem{BL84}
{\sc Bensoussan, A. and Lions, J.-L.} (1984).
\newblock {\em Impulse control and quasivariational inequalities}.
\newblock Gauthier-Villars, Montrouge.
\newblock Translated from the French by J. M. Cole.

\bibitem{BP00}
{\sc Bielecki, T.~R. and Pliska, S.~R.} (2000).
\newblock Risk sensitive asset management with transaction costs.
\newblock {\em Finance Stoch.\/} {\bf 4,} 1--33.

\bibitem{christensen2012optimal}
{\sc Christensen, S. and Wittlinger, M.} (2012).
\newblock Optimal relaxed portfolio strategies for growth rate maximization
  problems with transaction costs.
\newblock {\em Working Paper, arXiv preprint arXiv:1209.0305\/}.

\bibitem{DN90}
{\sc Davis, M. H.~A. and Norman, A.~R.} (1990).
\newblock Portfolio selection with transaction costs.
\newblock {\em Math. Oper. Res.\/} {\bf 15,} 676--713.

\bibitem{EH88}
{\sc Eastham, J.~F. and Hastings, K.~J.} (1988).
\newblock Optimal impulse control of portfolios.
\newblock {\em Math. Oper. Res.\/} {\bf 13,} 588--605.

\bibitem{EK86}
{\sc Ethier, S.~N. and Kurtz, T.~G.} (1986).
\newblock {\em Markov processes}.
\newblock Wiley Series in Probability and Mathematical Statistics: Probability
  and Mathematical Statistics. John Wiley \& Sons Inc., New York.
\newblock Characterization and convergence.

\bibitem{F16}
{\sc Feodoria, M.-R.} (2016).
\newblock Optimal investment and utility indifference pricing in the presence
  of small fixed transaction costs.
\newblock {\em PhD thesis}.
\newblock University of Kiel.

\bibitem{GS72}
{\sc Gikhman, {\u{I}}.~I. and Skorokhod, A.~V.} (1972).
\newblock {\em Stochastic differential equations}.
\newblock Springer-Verlag, New York.
\newblock Translated from the Russian by Kenneth Wickwire, Ergebnisse der
  Mathematik und ihrer Grenzgebiete, Band 72.

\bibitem{guasoni2015nonlinear}
{\sc Guasoni, P. and Weber, M.} (2015).
\newblock Nonlinear price impact and portfolio choice.
\newblock {\em Preprint\/}.

\bibitem{IP09}
{\sc Irle, A. and Prelle, C.} (2009).
\newblock A renewal theoretic result in portfolio theory under transaction
  costs with multiple risky assets.
\newblock {\em Statist. Decisions\/} {\bf 27,} 211--233.

\bibitem{IS05}
{\sc Irle, A. and Sass, J.} (2005).
\newblock Good portfolio strategies under transaction costs: a renewal
  theoretic approach.
\newblock In {\em Stochastic Finance}.
\newblock Springer, New York pp.~321--341.

\bibitem{IS06}
{\sc Irle, A. and Sass, J.} (2006).
\newblock Optimal portfolio policies under fixed and proportional transaction
  costs.
\newblock {\em Adv. in Appl. Probab.\/} {\bf 38,} 916--942.

\bibitem{JS03}
{\sc Jacod, J. and Shiryaev, A.~N.} (2003).
\newblock {\em Limit theorems for stochastic processes} second~ed. vol.~288 of
  {\em Grundlehren der Mathematischen Wissenschaften [Fundamental Principles of
  Mathematical Sciences]}.
\newblock Springer-Verlag, Berlin.

\bibitem{K56}
{\sc Kelly, Jr., J.~L.} (1956).
\newblock A new interpretation of information rate.
\newblock {\em Bell. System Tech. J.\/} {\bf 35,} 917--926.

\bibitem{K98}
{\sc Korn, R.} (1998).
\newblock Portfolio optimization with strictly positive transaction costs and
  impulse control.
\newblock {\em Finance and Stochastics\/} {\bf 2,} 85--114.

\bibitem{KLRS07}
{\sc Kruk, L., Lehoczky, J., Ramanan, K. and Shreve, S.} (2007).
\newblock An explicit formula for the {S}korokhod map on {$[0,a]$}.
\newblock {\em Ann. Probab.\/} {\bf 35,} 1740--1768.

\bibitem{L14}
{\sc Landau, E.} (1914).
\newblock Einige {U}ngleichungen f\"ur zweimal differentiierbare {F}unktionen.
\newblock {\em Proc. London Math. Soc.\/} {\bf s2-13,} 43--49.

\bibitem{L12}
{\sc Ludwig, A.} (2012).
\newblock Comparisons and asymptotics in the theory of portfolio optimization
  under fixed and proportional transaction costs.
\newblock {\em PhD thesis}.
\newblock University of Kiel.

\bibitem{MC76}
{\sc Magill, M. J.~P. and Constantinides, G.~M.} (1976).
\newblock Portfolio selection with transactions costs.
\newblock {\em J. Econom. Theory\/} {\bf 13,} 245--263.

\bibitem{melnyk2017small}
{\sc Melnyk, Y. and Seifried, F.~T.} (2017).
\newblock Small-cost asymptotics for long-term growth rates in incomplete
  markets.
\newblock {\em Mathematical Finance\/}.

\bibitem{M69}
{\sc Merton, R.~C.} (1969).
\newblock Lifetime portfolio selection under uncertainty: The continuous-time
  case.
\newblock {\em The Review of Economics and Statistics\/} {\bf 51,} 247--57.

\bibitem{MoPl95}
{\sc Morton, A.~J. and Pliska, S.~R.} (1995).
\newblock Optimal portfolio management with fixed transaction costs.
\newblock {\em Math. Finance\/} {\bf 5,} 337--356.

\bibitem{N04}
{\sc Nagai, H.} (2004).
\newblock Risky fraction processes and problems with transaction costs.
\newblock In {\em Stochastic processes and applications to mathematical
  finance}.
\newblock World Sci. Publ., River Edge, NJ pp.~271--288.

\bibitem{SS94}
{\sc Shreve, S.~E. and Soner, H.~M.} (1994).
\newblock Optimal investment and consumption with transaction costs.
\newblock {\em Ann. Appl. Probab.\/} {\bf 4,} 609--692.

\bibitem{S97}
{\sc Sulem, A.} (1997).
\newblock Dynamic optimization for a mixed portfolio with transaction costs.
\newblock In {\em Numerical methods in finance}.
\newblock Publ. Newton Inst. Cambridge Univ. Press, Cambridge pp.~165--180.

\bibitem{TKA88}
{\sc Taksar, M., Klass, M.~J. and Assaf, D.} (1988).
\newblock A diffusion model for optimal portfolio selection in the presence of
  brokerage fees.
\newblock {\em Math. Oper. Res.\/} {\bf 13,} 277--294.

\bibitem{T06}
{\sc Tamura, T.} (2006).
\newblock Maximizing the growth rate of a portfolio with fixed and proportional
  transaction costs.
\newblock {\em Appl. Math. Optim.\/} {\bf 54,} 95--116.

\bibitem{T08}
{\sc Tamura, T.} (2008).
\newblock Maximization of the long-term growth rate for a portfolio with fixed
  and proportional transaction costs.
\newblock {\em Adv. in Appl. Probab.\/} {\bf 40,} 673--695.

\end{thebibliography}

\end{document}